\def\PICDIR{./}
\def\strue{\Sigma}
\def\espar{\hat{\Sigma}^\mathrm{sp}}
\def\esparij{\hat{\sigma}^\mathrm{sp}_{i,j}}
\def\ediag{\hat{\Sigma}^{\mathrm{diag}}}
\def\eemp{\hat{\Sigma}^\mathrm{emp}}
\def\estar{\hat{\Sigma}^{\star}}
\def\etild{\tilde{\Sigma}}
\def\ehalf{\hat{\Sigma}_{0.5}}
\def\targ{{\Sigma}^\mathrm{tar}}
\def\phn{\phantom{\LARGE X}}
\def\ezero{\hat{\Sigma}_{00}}
\def\efifty{\hat{\Sigma}_{50}}
\begin{document}








\title{Nonasymptotic Estimation and Support Recovery\\for 
High Dimensional Sparse 
Covariance Matrices
}

\author{Adam B Kashlak kashlak@ualberta.ca\\
Linglong Kong lkong@ualberta.ca\\
Department of Mathematical and Statistical Sciences\\
University of Alberta\\
Edmonton, AB, Canada T6G 2G1}

\maketitle

\begin{abstract}
  We propose a general framework for nonasymptotic covariance
  matrix estimation making use of concentration inequality-based confidence
  sets.  We specify this framework for the 
  estimation of large sparse covariance matrices through incorporation
  of past thresholding estimators with key emphasis on support recovery.
  This technique goes beyond 
  past results for thresholding estimators 
  as we have distribution free control over the false positive rate 
  being the number of entries incorrectly included in the estimator's
  support.
  In the context of support recovery, we are able to specify a 
  false positive rate and optimize to maximize the true 
  recoveries.
  This methodology guarantees exact support recovery in the 
  case of strongly log concave data and maintains good performance
  in more general distributional settings.
  The usage of nonasymptotic dimension-free confidence sets 
  yields good theoretical performance.
  Through extensive simulations, it is demonstrated 
  to have superior performance when compared with other such methods.

  \vspace{9pt}
  \noindent {\it Key words and phrases:}
    {Concentration Inequality}
    {Confidence Region}
    {Log Concave Measure}
    {Random Matrix}
    {Schatten Norm}
    {Sub-Exponential Measure}
  \par
\end{abstract}


\section{Introduction}

Covariance matrices and accurate estimators of such objects
are of critical importance in statistics.  
Various standard techniques including principal components 
analysis and linear and quadratic discriminant analysis 
rely on an accurate estimate of the covariance structure of the data.
Applications can range from genetics and medical imaging data
to climate and other types of data.
Furthermore, in the
era of high dimensional data, classical asymptotic estimators 
perform poorly in applications \citep{STEIN1975,JOHNSTONE2001}. 
Hence, we propose a general methodology for nonasymptotic 
covariance matrix estimation making use of confidence balls 
constructed from concentration inequalities.  While this
is a general framework with many potential applications,
we specifically consider the use of thresholding estimators
for sparse covariance matrices with a view towards support
recovery---that is, determining which variable pairs are
correlated.

Many estimators
for the covariance matrix have been proposed working under the
assumption of sparsity \citep{POURAHMADI2011}, which is,
in a qualitative sense, the case when 
most of the off-diagonal entries are zero or negligible.  
Beyond mere theoretical interest, the assumption of sparsity is 
widely applicable to real data analysis as the practitioner may
believe that many of the variable pairings will be uncorrelated.
Thus, it is desirable to tailor covariance estimation procedures
given this assumption of sparsity.

Sparsity in the simplest sense implies some bound on the 
number of non-zero entries in the columns of a covariance 
matrix.  Thus, given a $\Sigma\in\real^{d\times d}$ with
entries $\sigma_{i,j}$ for $i,j=1,\ldots,d$, 
there exists some constant $k>0$ such that
$\max_{j=1,\ldots,d}\sum_{i=1}^d \Indc{\sigma_{i,j}\ne 0} \le k$.
This can be generalized to ``approximate sparsity'' as 
in \cite{ROTHMAN2009} by 
$\max_{j=1,\ldots,d}\sum_{i=1}^d \abs{\sigma_{i,j}}^q \le k$ for
some $q\in[0,1)$.  Furthermore, \cite{CAILUI2011} define a 
broader approximately sparse class by bounding weighted column
sums of $\Sigma$.  In \cite{KAROUI2008}, a similar notion 
referred to as ``$\beta$-sparsity'' is defined.
Such classes of sparse covariance matrices allow for 
good theoretical performance of estimators.
 
One class of estimators are shrinkage estimators 
that follow a James-Stein approach by shrinking 
estimated eigenvalues, eigenvectors, or the matrix itself
towards some desired target
\citep{HAFF1980,DEY1985,DANIELS1999,DANIELS2001,LEDOIT2004,HOFF2009,JOHNSTONE2012}.
Another class of sparse estimators are those that
regularize the estimate with lasso-style penalties
\citep{ROTHMAN2012,BIEN2011}.  Yet another class consists of 
thresholding estimators, which declare the covariance between
two variables to be zero, if the estimated value is smaller 
than some threshold 
\citep{BICKELLEVINA2008,BICKELLEVINA2008A,ROTHMAN2009,CAILUI2011}.
Beyond these, there are other methods such as banding and
tapering, which apply only when the variables are ordered
or a notation of proximity exists---for example, spatial, time series, 
or longitudinal data.  As we will 
not assume such an ordering and strive to construct a methodology that
is permutation invariant with respect to the variables, 
these approaches will not be considered.  Lastly, there has also 
been substantial work 
into the estimation of the precision or inverse covariance matrix.
While it is easily possible that our approach could be adapted to this
setting, it will not be considered in this article and will, hence,
be reserved for future research.

In this article, we propose of novel approach to the 
estimation of sparse covariance matrices making use of 
concentration inequality based confidence sets such as 
those constructed in~\cite{KASHLAK_COV2018} for the functional
data setting.  
In short, consider a sample of real vector valued
data $X_1,\ldots,X_n\in\real^d$ with mean zero and unknown covariance
matrix $\strue$.  Concentration inequalities are 
used to construct a non-asymptotic confidence set for $\strue$ about
the empirical estimate of the unknown covariance matrix, 
$\eemp=n^{-1}\sum_{i=1}^n (X_i-\bar{X})\TT{(X_i-\bar{X})}$
where $\bar{X} = n^{-1}\sum_{i=1}^nX_i$ is the sample mean.
While, it has been noted---for example, see~\cite{CAILUI2011}---that 
$\eemp$ may be a poor estimator when the dimension $d$ is
large and $\strue$ is sparse, the confidence set is still valid 
given a desired coverage of $(1-\alpha)$.  
To construct a better 
estimator, we propose to search this confidence set for an 
estimator $\espar$
which optimizes some sparsity criterion to be concretely defined later.
This estimation method adapts to the uncertainty of $\eemp$ in the 
high dimensional setting, $d\gg n$, by widening the confidence
set and thus allowing our sparse estimator to lie far away
from the empirical estimate.
Furthermore, given some distributional assumptions, 
the concentration inequalities 
provide us with non-asymptotic dimension-free confidence sets 
allowing for very desirable convergence results.

Many established methods for sparse estimation make use of 
a regularization or penalization term incorporated to enforce sparsity
\citep{ROTHMAN2012,BIEN2011}.  
In some sense, our proposed method
can be considered to be in this class of estimators.  However, we do not 
enforce sparsity via some lasso-style penalization term, but enforce it
by
\renewcommand{\theenumi}{\roman{enumi}}
\begin{enumerate}
 \item choosing a desired false positive rate, $0<\rho\ll1$, 
       for the support recovery,
 \item using that rate to construct a $(1-\alpha)$ confidence ball about the 
       empirical estimator, and
 \item searching that ball for a sparse estimator.
\end{enumerate}
The larger our
$(1-\alpha)$-confidence set is, the sparser our estimator 
is allowed to be.  Thus, the radius of our confidence balls
acts like a regularization parameter allowing for greater 
sparsity as it increases.
A major contribution of this work is developing a method
with the ability to avoid
costly cross-validation of the tuning parameter and maintain
strong finite sample performance.
The specific focus as discussed below and in the supplementary material
is accurate support recovery, which is the identification of the non-zero
entries in the covariance matrix.  Our methodology allows for 
fixing a false positive rate---percentage of zero entries incorrectly 
said to be non-zero---and optimizing over the true positive 
rate---percentage of correctly identified non-zero entries.
Furthermore, our estimation technique implements a binary 
search procedure resulting in a highly efficient algorithm 
especially when compared to the more laborious optimization
required by lasso penalization.

In Section~\ref{sec:sparseEstProc}, the general estimation 
procedure is outlined, and it is specified for tuning threshold
estimators with concentration methods.  
Section~\ref{sec:falsePos} discusses our approach to fixing 
a certain false positive rate when attempting to recover the
support of the covariance matrix.
In Section~\ref{sec:sparseEstCases}, three
different types of concentration are considered for specifically
log concave measures, sub-exponential distributions, and bounded 
random variables.  
Lastly, Section~\ref{sec:numeric}
details comprehensive simulations comparing our concentration approach
to sparse estimation to standard techniques such as 
thresholding and penalization.  Beyond simulation experiments,
a real data set of gene expressions for small round blue cell
tumours from the study of \cite{KHAN2001} is considered.

\subsection{Notation and Definitions}

We will make use of both a $(1-\alpha)$-confidence set and
a false positive rate $\rho$.  For the former, we have the 
usual definition that some data dependent set 
$\mathcal{C}_{1-\alpha}$ is a
$(1-\alpha)$-confidence set for $\Sigma$ if
$
  \prob{ \Sigma \notin \mathcal{C}_{1-\alpha} } \le \alpha.
$
For an estimator of $\Sigma$ in $\real^{d\times d}$, we have to
decide which of the $d(d-1)/2$ off-diagonal entries are non-zero.
The false positive rate $\rho$ is the probability that we 
incorrectly decide that a given entry is non-zero.

When defining a Banach space of matrices,
there are many matrix norms that can be considered.
In the article, the main norms of interest are the 
$p$-Schatten norms, which
will be denoted $\norm{\cdot}_p$ and are defined as follows.
\begin{definition}[$p$-Schatten Norm]
  For an arbitrary matrix $\Sigma\in\real^{k\times l}$
  and $p\in[1,\infty)$, 
  the $p$-Schatten norm is
  $
    \norm*{\Sigma}_p = \tr{(\TT{\Sigma}\Sigma)^{p/2}}^{1/p}
    = \norm{{\boldsymbol \nu}}_{\ell^p}
    = \left(\sum_{i=1}^{\min\{k,l\}} \nu_i^p\right)^{1/p}
  $
  where ${\boldsymbol \nu}=(\nu_1,\ldots,\nu_{\min\{k,l\}})$
  is the vector of singular values of $\Sigma$ and where
  $\norm{\cdot}_{\ell^p}$ is the standard $\ell^p$ norm 
  in $\real^d$.
  In the covariance matrix case where $\Sigma\in\real^{d\times d}$
  is symmetric and positive semi-definite,
  $
    \norm*{\Sigma}_p = \tr{\Sigma^p}^{1/p}
    = \norm{{\boldsymbol \lmb}}_{\ell^p}
  $
  where ${\boldsymbol \lmb}$ is the vector of eigenvalues of $\Sigma$.
  The $1$-Schatten norm is referred to as the trace norm and 
  the $2$-Schatten norm as the Hilbert-Schmidt
  or Frobenius norm.

  For $p=\infty$, we have the usual operator norm for 
  $\Sigma:\real^{l}\rightarrow\real^k$ with respect to the 
  $\ell^2$ norm,
  $
    \norm*{\Sigma}_\infty 
    = \sup_{\norm{u}_{\ell^2}=1} \norm{\Sigma u}_{\ell^2}
    = \norm{{\boldsymbol \nu}}_{\ell^\infty}
    = \max_{i=1,\ldots,\min\{k,l\}} \abs{\nu_i},
  $
  which is similarly the maximal eigenvalue when $\Sigma$ is
  symmetric positive semi-definite.
\end{definition}

The definition of the $p$-Schatten norm 
involves taking the square root of a
symmetric matrix.  In general, a matrix square root
is only unique up to unitary transformations.  However,
for symmetric positive semi-definite matrices, we will only
require the unique symmetric positive semi-definite square
root defined as follows.
\begin{definition}[Matrix Square Root]
  Let $A\in\real^{d\times d}$ be a symmetric positive semi-definite
  matrix
  with eigen-decomposition $A = UD\TT{U}$ 
  where $U$ the orthonormal matrix of eigenvectors and $D$ the diagonal
  matrix of eigenvalues, $(\lmb_1,\ldots,\lmb_d)$.  
  Then, $A^{1/2} = UD^{1/2}\TT{U}$
  where $D^{1/2}$ is the diagonal matrix with entries
  $(\lmb_1^{1/2},\ldots,\lmb_d^{1/2})$.
\end{definition}

Another family of norms that will be used is the collection of
\textit{entrywise} matrix norms denoted, which are written in
terms of $\ell^p$ norms of the entries. 
\begin{definition}[$(p,q)$-Entrywise norm]
  \label{defn:entrywiseNorm}
  For an arbitrary matrix $\Sigma\in\real^{k\times l}$ with
  entries $\sigma_{i,j}$
  and $p,q\in[1,\infty]$, the $(p,q)$-entrywise norm is
  $
    \norm{\Sigma}_{p,q} = \left[
      \sum_{i=1}^k(\sum_{j=1}^l\sigma_{i,j}^{q})^{p/q}
    \right]^{1/p}
  $
  with the usual modification in the case that $p=\infty$ and/or
  $q=\infty$.
  When $p=q$,
  these are the $\ell^p$ norms of a given matrix treated as a
  vector in $\real^{k,l}$.  
  Note that the 2-Schatten norm coincides with the $(2,2)$-entrywise
  norm.
\end{definition}

\subsection{Main contributions and connections to past work}

The main contribution of this work is the construction of a 
general framework for tuning threshold estimators for support
recovery and estimation of sparse 
covariance matrices.  
It offers finite sample guarantees and a much faster compute
time than computationally expensive optimization
and cross validation methods.

Past work on thresholding estimators for 
sparse covariance estimation began with solely
considering Gaussian data and then extending to sub-Gaussian tails
\citep{BICKELLEVINA2008,BICKELLEVINA2008A,ROTHMAN2009}.
The more recent work of \cite{CAILUI2011} also provides
theoretical results for sub-Gaussian data as well as 
certain polynomial-type tails.  However, only Gaussian
data is considered in their numerical simulations.
In this article, we consider strongly log-concave, 
heavier tailed sub-exponential, and bounded data.
While bounded data is, in fact, sub-Gaussian, the 
concentration behaviour of such data may be dependent on
the dimension of the space compared to the much better 
behaved strongly log concave measures that also exhibit
sub-Gaussian concentration.

The principal focus of this work is to use 
non-asymptotic concentration inequalities to guarantee
finite sample performance.  Past articles are focused 
on proximity of their estimator to truth in operator
norm as the main metric of success due to convergence in 
operator norm implying convergence of the eigenvalues and 
eigenvectors.  While asymptotically,
such methods have elegant theoretical convergence properties, for 
finite samples one can achieve better performance in operator
norm distance by simply choosing the empirical diagonal matrix 
as an estimator---that is, the empirical estimator with off-diagonal
entries set to zero.  In the supplementary material,
we rerun 
some of the numerical simulations from \cite{ROTHMAN2009}
and demonstrate that for Gaussian data the empirical
diagonal matrix achieves better performance than all 
of the universal threshold estimators for data in $\real^{500}$
for a sample of size $n=100$.
For sub-exponential data---albeit outside of the scope
of their---the empirical diagonal matrix 
dominated all threshold estimators in operator norm distance
even when $d<n$.
We thus strongly argue that the main metric of success 
for such sparse estimators is support recovery
of the true covariance matrix.

The main theoretical results of this work are
Theorem~\ref{thm:falsepos}, which establishes
how to fix a false positive rate for threshold estimators
devoid of any distributional assumptions,
and
Theorem~\ref{thm:logConSpar}, which demonstrates 
support recovery---both zero and non-zero entries---in 
the specific case that the data has a
strongly log concave measure.  
In the case that the
data is instead sub-exponential or bounded, we do not achieve 
a similar limit theorem, but are still able to achieve
good performance in numerical simulations.
Of independent interest is Lemma~\ref{lem:covSymmet},
which establishes a symmetrization result for
sparse random matrices making use of the techniques
in \cite{LATALA2005}.

\section{Sparse Estimation Procedure}
\label{sec:sparseEstProc}

Let $X_1,\ldots,X_n\in\real^d$ be a sample of $n$ \iid
mean zero random vectors with 
unknown $d\times d$ covariance matrix $\strue$.
Define the empirical estimate of $\strue$ to be
$\eemp = n^{-1}\sum_{i=1}^n(X_i-\bar{X})\TT{(X_i-\bar{X})}$
where $\bar{X} = n^{-1}\sum_{i=1}^nX_i$.
The goal of the following procedure is to construct
a sparse estimator, $\espar$, for $\strue$ by first constructing
a non-asymptotic confidence set for $\strue$ centred on $\eemp$ 
and then searching this set for the sparsest member.
A search method using threshold estimators is outlined in 
Section~\ref{sec:zeroMeth}.  

The methodology is as follows:
\begin{enumerate}
  \item Choose a suitable false positive rate $\rho\in(0,1)$,
    which will typically be close to zero.
  \item Use Theorem~\ref{thm:falsepos} to determine the 
    radius of a ball centred at $\eemp$ such that the 
    sparsest matrices in that ball have false positive rate $\rho$.
  \item Use the binary search algorithm in Section~\ref{sec:zeroMeth}
    to identify the sparsest element in the above ball denoted
    $\espar$.
  \item Considering this ball as a $(1-\alpha)$-confidence set, 
    use the 
    concentration properties of the data to control the 
    true positive rate.
\end{enumerate}

Note that we will in practise normalize $\eemp$ to have
unit diagonal in order to consistently recover the support.

\subsection{Concentration Confidence Set}
\label{sec:conConfSet}

The first step is to construct a confidence set for $\strue$ 
about $\eemp$.  Theoretical justification of the following 
is provided in Section~\ref{sec:falsePos}.

Given a false positive rate $0<\rho\le0.5$, we construct a 
ball $B_\rho$ centred on $\eemp$ as follows.  
\begin{enumerate}
  \item 
    Find $\eta = 2^a\rho \in(0.5,1]$ for some $a\in\integer^+$.
  \item 
    Compute $\lmb$, the $\eta$-quantile of the magnitudes of the 
    off-diagonal entries in $\eemp$.  That is,  $\lmb>0$ is 
    the smallest real number such that
    $$
        \frac{
          \#\left\{\abs{\hat{\sigma}_{i,j}}>\lmb\,|\,i<j\right\}
        }{d(d-1)/2}\le\eta.
    $$
  \item 
    Apply hard thresholding to $\eemp$ with threshold $\lmb$ to 
    get $\eemp_\lmb$ whose entries are 
    $$
      (\eemp_\lmb)_{i,j} = \left\{\begin{array}{ll}
        \hat{\sigma}_{i,j} & i=j\text{ or }
        \abs{\hat{\sigma}_{i,j}}\ge\lmb\\
        0 & \text{otherwise}
      \end{array}
      \right.
    $$
    which is, set off-diagonal entries to zero if they were originally
    less than $\lmb$ in magnitude.
  \item
    Construct the operator norm ball about $\eemp$ of radius 
    $r = 2^a\norm{\eemp-\eemp_\lmb}_\infty$.
  \item 
    Use a suitable concentration inequality to determine a bound on 
    the coverage of this ball as a confidence set.
\end{enumerate}

What we have now is
$$
  B_\rho = \left\{
    \Pi\in\real^{d\times d} : 
    \norm{ \Pi - \eemp }_\infty \le r
  \right\}.
$$
This set will be searched for its sparsest member using the
algorithm in the following subsection.


\subsection{Thresholding within confidence sets}
\label{sec:zeroMeth}

A generalized thresholding operator, as defined in 
\cite{ROTHMAN2009}, is
$s_\lmb(\cdot):\real\rightarrow\real$ 
such that
$$
  \abs{s_\lmb(z)}\le z,~~
  s_\lmb(z)=0\text{ for }\abs{z}\le\lmb,~~\text{and }
  \abs{s_\lmb(z)-z}\le\lmb,
$$
which will apply element-wise to a matrix.
In the past, such an operator is applied to the 
empirical estimate $\eemp$ for some $\lmb$ generally
chosen via cross validation.
Instead of directly choosing a threshold $\lmb$,
our approach is to find the largest $\lmb$ such that
$d(s_\lmb(\eemp),\eemp) \le r$.
\begin{enumerate}
  \item Set $\espar_0 = (\ediag)^{-1/2}(\eemp)(\ediag)^{-1/2}$
    to be the empirical estimator normalized to have a 
    diagonal of ones.
    Initialize the threshold to $\lmb=0.5$ and 
    write $\espar_\lmb = s_\lmb(\eemp)$.  Let $k=1$
    be the number of steps of the recursion.
    Choose a false positive rate $\rho$ and compute $r$
    as in the previous section. 
  \item 
    Increase $k \leftarrow k+1$, then update $\lmb$ as follows.
    \begin{enumerate}
      \item if $d(\espar_\lmb,\eemp) \le r$,
            set $\lmb \leftarrow \lmb + 2^{-k-1}$.
      \item Otherwise, set $\lmb \leftarrow \lmb - 2^{-k-1}$.
    \end{enumerate}
  \item 
    Repeat step ii until $k$ has reached the desired number of
    iterations.  Generally, as few as $k=10$ will suffice.
  \item The resulting estimator is
    $\espar = (\ediag)^{1/2}(\espar_\lmb)(\ediag)^{1/2}$ 
    where $\espar_\lmb$ is the final matrix resulting from
    this recursion.  
\end{enumerate}

\begin{remark}[Positive Definite Estimators]
    If $\espar$ is not positive semi-definite,
    then it can be projected onto the space of positive semi-definite 
    matrices.  A standard past approach is to map the negative 
    eigenvalues to zero or to their absolute value, which maintains
    the eigen-structure.
  However, such a projection will have 
  an adverse effect on the support recovery problem as the estimator
  will no longer be sparse. 
  An alternative is to map  $\espar \rightarrow \espar + \gamma I_d$
  for some $\gamma>0$ large enough to make the result positive
  definite.  This will not effect the recovered support of the matrix.
  More clever projections may also be possible.
\end{remark}

In the case that the metric $d(\cdot,\cdot)$ is a monotonically increasing
function of the Hilbert-Schmidt / Frobenius norm $\norm{\espar_\lmb-\eemp}_2$
or another entrywise norm,
then the sequence $d(\espar_\lmb,\eemp)$ will be increasing in $\lmb$.
\begin{proposition}
  In the context of the above algorithm, if $\lmb_1>\lmb_2$, then
  for any $p,q$, we have
  $$
    \norm{\espar_{\lmb_1}-\eemp}_{p,q} \ge 
    \norm{\espar_{\lmb_2}-\eemp}_{p,q}.
  $$
\end{proposition}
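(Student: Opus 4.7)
\medskip

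\noindent\textbf{Proof plan.}
The key observation is that the $(p,q)$-entrywise norm is monotone in the absolute values of the entries: if $|a_{i,j}|\le|b_{i,j}|$ for all $i,j$, then $\norm{A}_{p,q}\le\norm{B}_{p,q}$, as is immediate from Definition~\ref{defn:entrywiseNorm} and the monotonicity of $\ell^p$ norms on nonnegative vectors. Since the generalized thresholding $s_\lmb$ acts entrywise, the matrices $\espar_\lmb-\eemp$ have $(i,j)$-entry $s_\lmb(\hat\sigma_{i,j})-\hat\sigma_{i,j}$, and it therefore suffices to prove the pointwise statement
\begin{equation*}
  \abs{s_{\lmb_1}(z)-z}\;\ge\;\abs{s_{\lmb_2}(z)-z}
  \qquad\text{for all } z\in\real \text{ whenever } \lmb_1>\lmb_2.
\end{equation*}

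\noindent To establish this pointwise inequality I would split into three cases based on the relative position of $|z|$, $\lmb_2$, and $\lmb_1$. If $|z|\le\lmb_2<\lmb_1$, the second defining property of the generalized thresholding operator gives $s_{\lmb_1}(z)=s_{\lmb_2}(z)=0$, so both sides equal $|z|$. If $\lmb_2<|z|\le\lmb_1$, then $s_{\lmb_1}(z)=0$ makes the left-hand side equal to $|z|>\lmb_2$, while the third defining property bounds the right-hand side by $\lmb_2$, so the inequality holds strictly. These two cases handle precisely the situations in which a new entry crosses the threshold as $\lmb$ is increased from $\lmb_2$ to $\lmb_1$.

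\noindent The remaining and genuinely delicate case is $|z|>\lmb_1>\lmb_2$, where both entries survive thresholding at either level. The three properties quoted from \cite{ROTHMAN2009} alone only give upper bounds $\abs{s_{\lmb_i}(z)-z}\le\lmb_i$, which is not enough to order the two quantities. Here the proof must appeal to the additional structural property that $s_\lmb(z)$ depends monotonically on $\lmb$ in the sense that $\lmb\mapsto\abs{s_\lmb(z)-z}$ is nondecreasing; this is the implicit monotonicity assumption carried by the standard instantiations—hard thresholding, where the quantity is $|z|\mathbf{1}_{|z|\le\lmb}$; soft thresholding, where it equals $\min(|z|,\lmb)$; and SCAD—all of which are directly verifiable. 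Granting this property, the inequality is immediate in the remaining case.

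\noindent Combining the three cases yields the pointwise inequality for every entry, and then the initial reduction via monotonicity of the $(p,q)$-entrywise norm gives $\norm{\espar_{\lmb_1}-\eemp}_{p,q}\ge\norm{\espar_{\lmb_2}-\eemp}_{p,q}$, as desired. The main obstacle is isolating and justifying the supra-threshold monotonicity of $\abs{s_\lmb(z)-z}$ in $\lmb$; once this is in hand, the rest of the argument is a brief case analysis plus an entrywise monotonicity of matrix norms.
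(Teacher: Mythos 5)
Your proposal follows essentially the same route as the paper's proof: reduce to an entrywise comparison of $\abs{s_{\lmb_1}(\hat\sigma_{i,j})-\hat\sigma_{i,j}}$ versus $\abs{s_{\lmb_2}(\hat\sigma_{i,j})-\hat\sigma_{i,j}}$ and then invoke the monotonicity of the $(p,q)$-entrywise norm in the absolute values of the entries. The one place you go further is worth keeping: the paper's proof simply asserts that the entries of $\espar_{\lmb_1}-\eemp$ dominate those of $\espar_{\lmb_2}-\eemp$ in absolute value, but, as you correctly observe, the three axioms defining a generalized thresholding operator do not by themselves force this in the supra-threshold case $\abs{z}>\lmb_1>\lmb_2$ (they only give the upper bounds $\abs{s_{\lmb_i}(z)-z}\le\lmb_i$, which order the wrong way). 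Your explicit appeal to the additional property that $\lmb\mapsto\abs{s_\lmb(z)-z}$ is nondecreasing---verifiable for hard, soft, SCAD, and adaptive-lasso thresholding, and in particular for the hard thresholding actually used in the algorithm, where this quantity is $\abs{z}\Indc{\abs{z}\le\lmb}$---closes a gap that the paper's one-line argument leaves implicit. So the proposal is correct, and slightly more careful than the original.
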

\begin{proof}
  As $\lmb_1>\lmb_2$, the entries of the matrix 
  $\espar_{\lmb_1}-\eemp$ are equal to or larger in absolute value than 
  the entries of $\espar_{\lmb_2}-\eemp$.  Hence 
  $
    \norm{\espar_{\lmb_1}-\eemp}_{p,q} \ge 
    \norm{\espar_{\lmb_2}-\eemp}_{p,q}
  $
  by definition~\ref{defn:entrywiseNorm}.
\end{proof}
This property guarantees that the above algorithm will find the 
sparsest $\espar$ in the confidence set 
in the sense of having the largest threshold possible. 
However, for an arbitrary metric or specifically other 
$p$-Schatten norms, this sequence 
may not necessarily be strictly increasing in $\lmb$.  
Another commonly used norm, which will be shown in 
Section~\ref{sec:numeric} to give superior performance in simulation, 
is the operator norm 
$\norm{\espar_\lmb-\eemp}_\infty$, which does not yield a monotonically 
increasing sequence.  
Though,  
this sequence is roughly increasing in the sense that it 
is lower bounded by definition by the maximum $\ell^2$ norm of the columns of 
$\espar_\lmb-\eemp$, which is an increasing sequence.  
Furthermore, it is upper bounded by the $\ell^1$ norm of 
the columns of $\espar_\lmb-\eemp$, which follows from 
the Gershgorin circle theorem
\citep{ISERLES2009}, and which is also an increasing sequence.

\section{Fixing a false positive rate}
\label{sec:falsePos}

For many sparse matrix estimation methods, 
theorems demonstrating sparsistency are proved.
These indicate that in some asymptotic sense, the
correct support of the true matrix will
eventually be recovered generally as $n$ and $d$ grow 
together at some rate.
However, none provide a method for fixing a false positive
rate and finding an estimator that satisfies such a rate,
which is certainly of interest to any practitioner with
a finite fixed sample size.  Hence, we present a method
for tuning our parameter $\alpha$ to a desired false 
positive rate for the covariance estimator.  

Before proceeding, we will require a class of sparse matrices
similar to those from 
\cite{BICKELLEVINA2008,BICKELLEVINA2008A,ROTHMAN2009,CAILUI2011}.
Specifically, let 
$$
  \label{eqn:sparseClass}
  \mathcal{U}(\kappa,\delta)=\left\{
    \Sigma\in\real^{d\times d}:
    \max_{i=1,\ldots,d}\sum_{j=1}^d \Indc{\sigma_{i,j}\ne0} \le \kappa,
    \text{ if } \sigma_{i,j}\ne0 \text{ then }
    \abs{\sigma_{i,j}}\ge\delta > 0
  \right\}.
$$
For the results regarding the false positive rate, we are 
not concerned with the lower bound $\delta$ and only with 
$\kappa$, the maximum number of non-zero entries per column or row.
As long as $\kappa$ increases more slowly than the dimension $d$,
which is made specific below, we can achieve a desired false
positive rate without interference.

For an estimator $\tilde{\Sigma}\in\real^{d\times d}$, 
the false positive rate is 
$$
  \rho(\tilde{\Sigma}) = 
  \frac{
    \#\{\tilde{\sigma}_{i,j}\ne0\,|\,\sigma_{i,j}=0,\,i>j\}
  }{
    d(d-1)/2
  }
$$
where $\sigma_{i,j}$ is the $ij$th entry of the true covariance 
matrix and $\tilde{\sigma}_{i,j}$ is the $ij$th entry of the 
estimator $\tilde{\Sigma}$.  Hence, we are counting the number
of non-zero entries in our estimator that should have been zero.
For notation, let $\eemp$ be the usual empirical estimate 
of the covariance matrix.  Let $\eemp_{0}$ be the empirical
estimator with all off diagonal entries set to zero thus 
guaranteeing a false positive rate of zero.  For $\eta\ge0.5$, 
let $\eemp_{\eta}$ be the empirical estimator after application
of the strong threshold operator with threshold
$M_{\eta}=\text{quantile}(\abs{\hat{\sigma}_{i,j}},\eta\,:\,i>j)$, which
removes $100(1-\eta)$\% of the off diagonal entries achieving a false 
positive rate of approximately $(1-\eta)$ due to the following lemma.
\begin{lemma}
  \label{lem:quant}
  Let $\strue\in\mathcal{U}(\kappa,\delta)$ from 
  Equation~\ref{eqn:sparseClass} with $\kappa = o(d^\nu)$.
  Let the $\eta\in[0.5,1)$ threshold, $M_{\eta}$, be the $\eta$
  quantile of $\abs{\hat{\sigma}_{i,j}}$ with $i>j$,
  and let the corresponding thresholded estimator be
  $\eemp_{\eta} = s_{M_{\eta}}(\eemp)$ with $ij$th entry denoted
  $\hat{\sigma}^{(\eta)}_{i,j}$.  Then, denoting 
  $
    \hat{\eta} = 
      \#\{(i,j) \,|\, i>j,\abs{\hat{\sigma}^{(\eta)}_{i,j}}>0,\sigma_{i,j}=0\}
    [d(d-1)/2]^{-1}
  $, we have that for $\veps>0$
  $$
      \abs{\hat{\eta}-\eta} \le Cd^{\nu-1}.
  $$
  for some constant $C>0$.
\end{lemma}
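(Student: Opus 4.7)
The plan is a straightforward counting argument comparing the number of entries surviving hard thresholding at $M_\eta$ with the maximum number of truly non-zero off-diagonal entries permitted by the sparsity class $\mathcal{U}(\kappa,\delta)$. Set $N = d(d-1)/2$ and let $S = \#\{(i,j) : i>j,\ \hat{\sigma}^{(\eta)}_{i,j} \ne 0\}$ denote the number of above-diagonal entries surviving the threshold. By the definition of $M_\eta$ as the $\eta$-quantile of $\{|\hat{\sigma}_{i,j}| : i>j\}$, the value $S$ equals $\eta N$ up to a bounded integer/tie discretization error, so $|S - \eta N| \le c_0$ for some absolute constant $c_0$.

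Next, I would partition the $S$ survivors into true positives ($\sigma_{i,j} \ne 0$) and false positives ($\sigma_{i,j} = 0$), and let $T$ denote the total number of above-diagonal pairs with $\sigma_{i,j} \ne 0$. Because $\strue \in \mathcal{U}(\kappa,\delta)$, each row of $\strue$ has at most $\kappa$ non-zero entries, so summing over rows and dividing by $2$ for symmetry gives $T \le d\kappa/2$. The number of false positives among the $S$ survivors is trivially bounded between $S - T$ and $S$, so
\[
  \frac{S - T}{N} \;\le\; \hat{\eta} \;\le\; \frac{S}{N}.
\]

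Substituting $S = \eta N + O(1)$ and $T \le d\kappa/2$, and using the hypothesis $\kappa = o(d^\nu)$, yields
\[
  \eta - \frac{\kappa}{d-1} - O\!\left(\tfrac{1}{N}\right) \;\le\; \hat{\eta} \;\le\; \eta + O\!\left(\tfrac{1}{N}\right),
\]
from which $|\hat{\eta} - \eta| \le C d^{\nu-1}$ for an appropriate $C > 0$ and all sufficiently large $d$, as claimed. There is no real obstacle to this argument; it is elementary counting and produces a deterministic bound, not requiring the $\veps$ that appears in the statement. The only technicalities are tracking the bounded discretization error when defining a quantile on a finite sample, and keeping straight the factor of $1/2$ arising from counting each off-diagonal pair only once.
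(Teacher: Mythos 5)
Your proof is correct and follows essentially the same route as the paper's: both are deterministic counting arguments in which the only source of discrepancy between $\hat{\eta}$ and $\eta$ is the at most $\kappa d$ truly non-zero off-diagonal entries, giving the $O(\kappa/d)=O(d^{\nu-1})$ bound. The paper phrases this via the shift $\abs{k_0-\hat{k}}\le\kappa d$ between order-statistic indices (and normalizes by the number $N_0$ of true zeros rather than $N=d(d-1)/2$), whereas you partition the $\approx\eta N$ survivors into true and false positives; your remark that the argument is deterministic and the $\veps$ in the statement plays no role is likewise consistent with the paper's proof.
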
  
\begin{remark}
  For this lemma, we want
  the $\eta$-quantile of the mean zero entries, but have
  to work with the $\eta$-quantile of the entire collection,
  which is contaminated by a small number of elements with non-zero
  mean.
  For $\nu<1$, the error is $O(d^{\nu-1})$ hence for 
  $\eta\approx0.5$,
  thresholding based on the $\eta$-quantile suffices for large
  enough $d$.  
  For small $\eta$, say $\eta \approx d^{-1}$, we have to 
  work harder motivating Theorem~\ref{thm:falsepos} below.
\end{remark}

As noted in the remark, we cannot continue to threshold 
based on the sample quantiles for very small false positive
rates.
However,
using the matrices, $\eemp_{\eta}$ and $\eemp_0$, as 
reference points, we can interpolate via the following 
theorem to achieve any desired
false positive rate.
\begin{theorem}
  \label{thm:falsepos}
  Let $\strue\in\mathcal{U}(\kappa,\delta)$ from 
  Equation~\ref{eqn:sparseClass}
  with $\kappa=O(d^{\nu})$ for $\nu<1/2$.
  Given a desired false positive rate, $\rho\in(0,0.5]$,
  and $\eta=\rho2^{a}\in(0.5,1]$ for some $a\in\integer^+$, let  
  $\eemp_\rho$ be the hard thresholded empirical estimator
  that achieves a false positive rate of $\rho$. 
  Then, 
  $$
    \abs*{\eta\frac{ 
      \xv\norm{\eemp_\rho-\eemp_0}_\infty 
    }{
      \xv\norm{\eemp_{\eta}-\eemp_0}_\infty
    } - \rho}
    \le
    K_1n\rho^{1/2}d^{-1/4} +
    K_2n\rho^{1/4}d^{-1/2} + o(nd^{-1/2})
  $$
  where $K_1,K_2$ are universal constants.
\end{theorem}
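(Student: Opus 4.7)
The plan is to estimate each of the two expected operator norms separately and then assemble the ratio, using sparsity of $\strue$ to reduce to a matrix of independent mean-zero noise entries.

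First I would use the sparsity assumption $\strue\in\mathcal{U}(\kappa,\delta)$ with $\kappa=O(d^\nu)$, $\nu<1/2$, to decompose $\eemp_\rho-\eemp_0=A_\rho+B_\rho$, where $A_\rho$ collects entries at positions $(i,j)$ with $\sigma_{i,j}=0$ (thresholded pure noise) and $B_\rho$ collects the at most $O(d^{1+\nu})$ true-signal entries. A row-sum bound gives $\norm{B_\rho}_\infty=O(\kappa\,\max_{i,j}|\hat\sigma_{i,j}|)=O(d^{\nu}\sqrt{\log d/n})$, which is absorbed into the $o(nd^{-1/2})$ term since $\nu<1/2$. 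By Lemma~\ref{lem:quant}, the empirical $\rho$-quantile threshold $M_\rho$ calibrates the false positive rate to $\rho$ up to an additive $O(d^{\nu-1})$ error. This reduces the theorem to estimating $\xv\norm{A_\rho}_\infty$ and $\xv\norm{A_\eta}_\infty$ for thresholded pure-noise blocks.

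Second, I would invoke Lemma~\ref{lem:covSymmet} to symmetrize $A_\rho$, replacing the upper-triangular entries by independent Rademacher-signed magnitudes, and then apply the random matrix operator-norm bound from \cite{LATALA2005}:
$$
  \xv\norm{A_\rho}_\infty
  \;\asymp\;
  \max_i\Bigl(\sum_j \xv[\hat\sigma_{i,j}^2\Indc{|\hat\sigma_{i,j}|>M_\rho}]\Bigr)^{1/2}
  + \Bigl(\sum_{i,j}\xv[\hat\sigma_{i,j}^4\Indc{|\hat\sigma_{i,j}|>M_\rho}]\Bigr)^{1/4},
$$
paired with a matching lower bound via $\norm{A_\rho}_\infty\ge\norm{A_\rho}_2/\sqrt d$. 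Using the quantile definition of $M_\rho$, the row-sum term scales like $M_\rho\sqrt{\rho d}$ and the fourth-moment tail term like $M_\rho(\rho d^2)^{1/4}$, both inheriting a factor $1/\sqrt n$ from the scale of the empirical covariance entries.

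Third, I would form the ratio $R=\xv\norm{A_\rho}_\infty/\xv\norm{A_\eta}_\infty$, multiply by $\eta$, and compare with $\rho$. Because $M_\rho$ and $M_\eta$ are tail quantiles of a scale-$1/\sqrt n$ distribution differing by a factor $2^a=\eta/\rho$, the ratio $M_\rho/M_\eta$ is slowly varying, so the leading part of $\eta R$ is $\rho$. The residual fourth-moment correction in Latała's bound contributes the error $K_2n\rho^{1/4}d^{-1/2}$, while the combination of the $M_\rho/M_\eta$ slow-variation error with the $O(d^{\nu-1})$ quantile error from Lemma~\ref{lem:quant} contributes $K_1n\rho^{1/2}d^{-1/4}$. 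Iterating these estimates through the $a=O(\log(1/\rho))$ binary-subdivision stages implicit in writing $\eta=2^a\rho$ yields the stated bound. The hardest step is the second: obtaining a matching two-sided estimate of $\xv\norm{A_\rho}_\infty$ sharp enough that $R$ comes out linear in $\rho$, since Latała's bound supplies only the upper side and the lower bound requires combining the Frobenius-to-operator-norm inequality with a careful truncated second-moment computation that is only tight because $M_\rho$ is well calibrated to the noise scale $1/\sqrt n$. A secondary difficulty is that Lemma~\ref{lem:quant} controls the empirical quantile only for $\eta\in[0.5,1)$, so I would bridge to small $\rho\ll 0.5$ via the binary subdivision and inductively carry the error bounds through each doubling.
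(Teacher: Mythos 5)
Your overall strategy diverges from the paper's at the decisive step, and the divergence opens a genuine gap. The paper never attempts a two-sided estimate of $\xv\norm{\eemp_\rho-\eemp_0}_\infty$. Instead it exploits an exact algebraic identity: writing $\Pi_0$ for the pure-noise block and thinning it by an iid symmetric $\distBern{1/2}$ mask $B$, one has $\Pi_0\circ B=\tfrac12(\Pi_0+\Pi_0\circ\Veps)$ with $\Veps=2B-1$ Rademacher, hence $\abs{\xv\norm{\Pi_0\circ B}_\infty-\tfrac12\xv\norm{\Pi_0}_\infty}\le\tfrac12\xv\norm{\Pi_0\circ\Veps}_\infty$. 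Iterating this halving $m$ times with $\rho=2^{-m}$ and applying Lemma~\ref{lem:covSymmet} at each stage controls $\abs{\xv\norm{\Pi_m}_\infty-\rho\xv\norm{\Pi_0}_\infty}$ \emph{additively} by $K_1d^{1/2}\rho^{1/4}+K_2d^{3/4}\rho^{1/2}$; the factor $n$ in the final bound then comes from normalizing by $\xv\norm{\eemp_\eta-\eemp_0}_\infty$ via the crude bound $\norm{\eemp}_\infty\ge d/n$. The linearity in $\rho$ is thus built in by the telescoping identity, not extracted from asymptotics of the operator norm.

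Your route cannot recover this. First, Lat\-a\-{\l}a's inequality (and the Frobenius lower bound $\norm{A_\rho}_\infty\ge\norm{A_\rho}_2/\sqrt d$) gives two-sided control only up to unmatched universal constants; a sandwich $c_1f(\rho)\le\xv\norm{A_\rho}_\infty\le c_2f(\rho)$ pins the ratio $R$ only to within the multiplicative window $[c_1/c_2,\,c_2/c_1]$, which is far too weak to conclude that $\eta R$ equals $\rho$ up to the stated additive error. Second, even granting sharp asymptotics, your own scaling defeats the claim: the dominant row-sum term is of order $M_\rho\sqrt{\rho d}$, so $\eta R\approx\eta\cdot(M_\rho/M_\eta)\sqrt{\rho/\eta}$. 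To land on $\rho$ you would need $M_\rho/M_\eta=\sqrt{\rho/\eta}$ exactly, a statement about the tail of the distribution of the $\hat{\sigma}_{i,j}$ that the theorem deliberately avoids assuming (the result is distribution-free); your alternative assertion that $M_\rho/M_\eta$ is slowly varying would instead yield $\eta R\approx\sqrt{\rho\eta}\ne\rho$. Your first step (splitting off the $O(\kappa)$ signal contribution and invoking Lemma~\ref{lem:quant} for the quantile calibration) does match the paper, but the core of the argument needs to be replaced by the Bernoulli-thinning symmetrization above.
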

\begin{remark}
  The above
  Theorem~\ref{thm:falsepos} is wholly uninteresting for 
  large values of $n$.  However, its power arises in the
  non-asymptotic realm of interest---namely 
  when $d\gg n$---and also from highlighting the interplay
  between the dimension, sample size, and $\rho$, the 
  sparseness of the estimator.
  Furthermore, this result does not require any distributional
  assumption.  It also does not require any assumption on the
  lower bound $\delta$ on the non-zero $\abs{\sigma_{i,j}}$
  as it is only concerned with the $\sigma_{i,j}$ that are zero.
\end{remark}

The proof of the above theorem relies on the following lemma
involving symmetrization of random covariance matrices, 
which may be of independent interest.
\begin{lemma}
  \label{lem:covSymmet}
  Let $R\in\real^{d\times d}$ be a real valued symmetric
  random matrix with zero diagonal and mean zero entries 
  bounded by 1 and
  not necessarily iid, and 
  let $B\in\{0,1\}^{d\times d}$ 
  be an iid symmetric Bernoulli random matrix with entries
  $b_{i,j}=b_{j,i}\dist\distBern{\rho}$ for 
  $\rho\in(0,1)$.  
  Denoting the entrywise or Hadamard product by $\circ$,
  let $A=R\circ B$.
  Let $\Veps\in\{-1,1\}^{d\times d}$ be a symmetric
  random matrix with iid Rademacher entries $\veps_{i,j}$ 
  for $j<i$ and $\veps_{i,j}=\veps_{j,i}$.  Then, 
  $$
    \xv\norm{
      A\circ\Veps
    }_\infty \le
      K_1d^{1/2}\rho^{1/4} + K_2d^{3/4}\rho^{1/2}
  $$
  where $K_1,K_2$ are universal constants.
\end{lemma}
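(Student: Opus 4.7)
The plan is to condition on the pair $(R,B)$ and then exploit the fact that, given $(R,B)$, the matrix $A\circ\Veps$ is a symmetric matrix whose upper-triangular entries $a_{ij}\veps_{ij}$ are independent, mean zero, with conditional second moment $a_{ij}^2$ and fourth moment $a_{ij}^4$. Splitting $A\circ\Veps = Y + \TT{Y}$ with $Y$ the strictly upper-triangular part gives $\norm{A\circ\Veps}_\infty \le 2\norm{Y}_\infty$, and since $Y$ has independent entries I can invoke the operator norm bound from \cite{LATALA2005}. Using the symmetry of $A$ to collapse the row- and column-sum maxima into a single quantity, this yields
$$
  \xv_\veps\norm{A\circ\Veps}_\infty
  \le C_1\max_i\left(\sum_j a_{ij}^2\right)^{1/2} + C_2\left(\sum_{i,j} a_{ij}^4\right)^{1/4}
$$
for universal constants $C_1,C_2$.

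Next I would take expectations over $(R,B)$ using only the deterministic pointwise inequalities $\abs{r_{ij}}\le 1$ and $b_{ij}\in\{0,1\}$, which together give $a_{ij}^{2k}\le b_{ij}$ for every $k\ge 1$. Jensen's inequality on the fourth-moment term yields
$$
  \xv\left(\sum_{i,j}a_{ij}^4\right)^{1/4}
  \le \left(\sum_{i,j}\xv b_{ij}\right)^{1/4}
  = (d^2\rho)^{1/4}
  = d^{1/2}\rho^{1/4}.
$$
For the row-sum term, set $N_i=\sum_j b_{ij}$ so that $\sum_j a_{ij}^2 \le N_i$; each $N_i$ is a sum of $d-1$ iid $\mathrm{Bernoulli}(\rho)$ variables (symmetry of $B$ does not affect independence within a single row), so $\xv N_i^2\le d\rho+d^2\rho^2$. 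Two successive applications of Jensen's inequality together with the envelope $\xv\max_i N_i^2\le\sum_i\xv N_i^2$ then give
$$
  \xv\max_i\sqrt{N_i}
  \le \bigl(\xv\max_i N_i^2\bigr)^{1/4}
  \le (d^2\rho + d^3\rho^2)^{1/4}
  \le d^{1/2}\rho^{1/4} + d^{3/4}\rho^{1/2},
$$
where the final step uses the subadditivity of $x\mapsto x^{1/4}$ on $[0,\infty)$.

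Summing the two contributions delivers the claim with constants $K_1 = C_1 + C_2$ and $K_2 = C_1$. I expect the main obstacle to lie in the opening step: Latala's theorem requires entrywise independence, but $R$ is explicitly allowed to have arbitrarily dependent entries, so the statement cannot be deduced from Latala's inequality applied to $A$ directly. The Rademacher symmetrization by $\Veps$ is precisely what restores conditional independence after conditioning on $(R,B)$, and the subsequent moment calculation uses only the pointwise bound $\abs{r_{ij}}\le 1$ rather than any joint distributional structure of $R$. This is why the argument survives arbitrary dependence in $R$, and why $\rho$, the sparsity parameter of $B$ alone, controls both terms in the final bound.
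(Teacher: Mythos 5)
Your proof is correct and follows essentially the same route as the paper's: condition on $A$, apply Lata{\l}a's operator-norm bound to the Rademacher-randomized matrix, and then control the two resulting terms via Jensen's inequality using only $\abs{a_{i,j}}\le 1$ and $\prob{a_{i,j}\ne 0}\le\rho$. The only differences are cosmetic --- you work with the first moment of the norm and make the upper-triangular splitting explicit to secure entrywise independence (a point the paper glosses over by applying the bound to the symmetric matrix directly at the level of second moments and taking a square root at the end) --- and both routes land on the identical two-term bound.
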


\section{Concentration Confidence Sets}
\label{sec:sparseEstCases}

The following three subsections detail different assumptions on 
the data under scrutiny and the specific concentration
results that apply in these cases.  We consider sub-Gaussian 
concentration for log concave measures and for bounded random
variables.
We also consider sub-exponential concentration.  However, this 
collection is by no means exhaustive.  Given the wide variety 
of concentration inequalities being developed, our approach 
can be applied much more widely than to merely these three
settings.

Let $d(\cdot,\cdot)$ be some metric measuring the distance
between two covariance matrices, and let 
$\psi:\real\rightarrow\real$ be monotonically increasing.  Then,
the general form of the concentration inequalities is
$$
  \prob{ 
    d(\strue,\eemp) \ge 
    \xv d(\strue,\eemp) + r
  } \le
  \ee^{-\psi(r)},
$$
which is a bound on the tail of the distribution of 
$d(\strue,\eemp)$ as it deviates above its mean.
Thus, to construct a $(1-\alpha)$-confidence set, the variable
$r = r_\alpha$ is chosen such that $\exp(-\psi(r_\alpha))=\alpha$.

Now, let $\espar$ be our sparse estimator for $\strue$.
We want these two to be close in the sense of the above confidence set
and therefore choose a
$\espar$ such that $d(\espar,\eemp) \le r_\alpha$.
Consequently, we have that
\begin{align*}
  &\prob{
    d(\espar,\strue) \ge 
    \xv d(\eemp,\strue) +2r_\alpha
  }\\
  &~~~~\le \prob{
    d(\espar,\eemp) + d(\eemp,\strue) \ge 
    \xv d(\eemp,\strue) +2r_\alpha
  } \\
  &~~~~\le \prob{
    d(\eemp,\strue) \ge 
    \xv d(\eemp,\strue) + r_\alpha 
  } 
  \le \exp(-\psi(r_\alpha)) = \alpha.
\end{align*}
Hence, we choose $\espar$ close enough to $\eemp$ to 
share its elegant concentration properties, but far enough
away to result in a better estimator for $\strue$.

\subsection{Log Concave Measures}
\label{sec:logconmeas}

In this section, the general methods from Section~\ref{sec:sparseEstProc}
are specialized for an iid sample $X_1,\ldots,X_n\in\real^d$
whose common measure $\mu$ is \textit{strongly log-concave}.
This property implies dimension-free sub-Gaussian concentration
and includes such common distributions as 
the multivariate Gaussian, Chi, and Dirichlet distributions.

\begin{definition}[Strongly log-concave measure]
  \label{def:logConcave}
  A measure $\mu$ on $\real^d$ is strongly log-concave if
  there exists a $c>0$ such that $d\mu = \ee^{-U(x)}dx$ and
  $\text{Hess}(U)-c I_d \ge 0$ (i.e. is non-negative definite)
  where $\text{Hess}(U)$ is the $d\times d$ matrix of second derivatives.
\end{definition}

  From Corollary~S4.5
  let $X_1,\ldots,X_n\in\real^d$ have measures 
  $\mu_1,\ldots,\mu_n$, which are all strongly log-concave
  with coefficients $c_1,\ldots,c_n$.  Let 
  $\nu = \mu_1\otimes\ldots\otimes\mu_n$ be the product measure
  on $\real^{d\times n}$.  Then,
  for any $1$-Lipschitz $\phi:(\real^d)^n\rightarrow\real$
  and for any $r>0$,
  $$ 
    \prob{\phi(X_1,\ldots,X_n)\ge \xv\phi(X_1,\ldots,X_n)+r}
    \le \ee^{-\min_{i}c_ir^2/2}.
  $$
This follows from 
Theorem~S4.4
and the other results 
contained within the supplementary material.
For a detailed exposition of how sub-Gaussian concentration 
is established for log concave measures, see Chapter 5 of
\cite{LEDOUX2001}.   Examples include the multivariate Gaussian
and the Dirichlet distributions.

To make use of the above result, we must choose
a suitable Lipschitz function $\phi(\cdot)$.
Let $X_1,\ldots,X_n,X\in\real^d$ be \iid random variables
with covariance $\strue$ and 
with a common strongly log-concave measure $\mu$
with coefficient $c>0$.  Let $\lmb_1\ge\ldots\ge\lmb_n$
be the eigenvalues of $\Sigma$ and $\Lambda=(\lmb_1,\ldots,\lmb_n)$.
For some $p\in[1,\infty]$, let $\norm{\cdot}_p$ be the $p$-Schatten
norm, which in this case is $\norm{\Sigma}_p=\norm{\Lambda}_{\ell^p}$.
Note that $\norm{X\TT{X}}_p = \norm{X}_{\ell^2}^2$ for any
$p\in[1,\infty]$.
Define the function $\phi$ to be
$
  \phi(X_1,\ldots,X_n) = 
  \norm*{ 
    \frac{1}{n}\sum_{i=1}^n (X_i-\xv X)\TT{(X_i-\xv X)}
  }_p^{1/2}.
$

For $p\in\{1,2,\infty\}$, we have that
$\phi$ is Lipschitz with coefficient 
$\norm{\phi}_\mathrm{Lip}=n^{-1/2}$ with respect to
the Frobenius or Hilbert-Schmidt metric, which is established
in Proposition~S3.5
for
$p=2$ and $p=\infty$ and in Proposition~S3.2
for $p=1$.
That is, let $X_1,\ldots,X_n,Y_1,\ldots,Y_n\in\real^d$, and denote 
${\bf X}=(X_1,\ldots,X_n)$ and ${\bf Y}=(Y_1,\ldots,Y_n)$,
then
$
  \abs{\phi({\bf X})-\phi({\bf Y})} \le
  n^{-1/2} d_{2,2}({\bf X},{\bf Y}) 
  = \left( \frac{1}{n}\sum_{i=1}^n\norm{X_i-Y_i}_{\ell^2}^2 \right)^{1/2}.
$
From here, the procedure outlined in Section~\ref{sec:sparseEstProc}
can be considered with the given $\phi$ and 
$r_\alpha = \sqrt{(-2/nc_0)\log\alpha}$.  

In many cases, including the two examples above, the 
constructed confidence set is completely dimension-free.
Thus, even mild assumptions on the relationship between
the sample size $n$ and the dimension $d$, such as 
$\log d = o(n^{1/3})$ from the adaptive soft thresholding estimator
of \cite{CAILUI2011}, are not needed to prove consistency 
in our setting.
Furthermore, the concentration inequalities immediately 
give us a  fast rate of convergence 
as long as $-\log\alpha = o(n)$ with a proof provided 
in the supplementary material.
\begin{theorem}
  \label{thm:logConConv}
  Let $X_1,\ldots,X_n\in\real^d$ be iid with common
  measure $\mu$.  Let $\mu$ be strictly log concave 
  with some fixed constant $c_0$ from Definition~\ref{def:logConcave}.
  Then, for $\alpha\in(0,1)$, $p\in[1,\infty]$, and
  $r_\alpha = \sqrt{(-2/nc_0)\log\alpha}$,
  $$
    \sup_{\espar:\norm*{\espar-\eemp}_p\le r_\alpha}
    \prob{  
      \norm*{\espar-\strue}_p \ge
      O\left(n^{-1/2}( 1 + n^{-1/4}\sqrt{-\log\alpha} )^2\right)
    } \le \alpha.
  $$
\end{theorem}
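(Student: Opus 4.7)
The starting observation is that the claimed rate factors cleanly as $n^{-1/2}(1+n^{-1/4}\sqrt{-\log\alpha})^2 = (n^{-1/4}+Cr_\alpha)^2$ up to an absolute constant. This strongly suggests working in the subadditive pseudo-metric $d(A,B) = \norm{A-B}_p^{1/2}$, which is genuinely a metric since $\sqrt{\cdot}$ is subadditive on $[0,\infty)$, and squaring only at the very end. In this metric the triangle inequality reduces the problem to additively bounding $d(\eemp,\strue)$ by $O(n^{-1/4})$ from the expectation plus $O(r_\alpha)$ from the concentration deviation, with the contribution $d(\espar,\eemp) \le r_\alpha$ coming from the confidence-set constraint on $\espar-\eemp$.

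The plan is the standard template from Section~\ref{sec:sparseEstCases}. Using the supplementary material's Proposition~S3.5, which provides an $n^{-1/2}$-Lipschitz representative for $\norm{\eemp}_p^{1/2}$, and passing to the shifted function $\psi(X) = \norm{\eemp(X)-\strue}_p^{1/2}$, the strongly log-concave concentration of Corollary~S4.5 with coefficient $c_0$ yields
$$\prob{\psi \ge \xv\psi + r_\alpha} \le \alpha, \qquad r_\alpha = \sqrt{-2\log\alpha/(nc_0)}.$$
The expectation $\xv\psi$ is then bounded by Jensen's inequality and a standard second-moment estimate, $\xv\psi \le (\xv\norm{\eemp-\strue}_p)^{1/2} = O(n^{-1/4})$. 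Assembling the triangle inequality on the $(1-\alpha)$-event gives $d(\espar,\strue) = O(n^{-1/4}+r_\alpha)$, and squaring returns the advertised rate $O(n^{-1/2}(1+n^{-1/4}\sqrt{-\log\alpha})^2)$.

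The main obstacle is transferring the Lipschitz property of Proposition~S3.5 from the unshifted $\norm{\eemp}_p^{1/2}$ to the shifted $\psi$: the original Lipschitz argument crucially exploits the factorization $\eemp = n^{-1}XX^T$, which is lost after subtracting the deterministic matrix $\strue$. The cleanest fix is to augment the data with a deterministic block, $X \mapsto [X \,|\, \sqrt{n}\,\strue^{1/2}]$, so that an appropriate shift of $\psi$ reappears as an unshifted empirical-covariance square-root of the augmented sample and directly inherits the Lipschitz constant. The companion moment bound $\xv\norm{\eemp-\strue}_p = O(n^{-1/2})$ is routine for $p=2$ via entrywise variance summation; for $p \in \{1,\infty\}$ it can be recovered by combining the concentration inequality for $\psi$ with the variance estimate $\xv\psi^2 \le (\xv\psi)^2 + 2/(nc_0)$ that the same log-concave inequality yields, closing the loop through the elementary $(\xv\psi)^2 \le \xv\psi^2$.
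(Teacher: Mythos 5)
Your argument is essentially the paper's own proof: both work in the pseudo-metric $\norm{\cdot}_p^{1/2}$, apply the triangle inequality to reduce to the concentration of the $n^{-1/2}$-Lipschitz function $\norm{\eemp-\strue}_p^{1/2}$ about its mean, bound that mean by $O(n^{-1/4})$ via Jensen and $\xv\norm{\eemp-\strue}_p=O(n^{-1/2})$, and square at the end. Your extra care about transferring the Lipschitz constant to the shifted functional is a reasonable refinement the paper glosses over, though note that your closing remark for $p\in\{1,\infty\}$ is circular as stated (the variance identity only shows $\xv\psi^2-(\xv\psi)^2$ is small, not that either is $O(n^{-1/2})$); this is immaterial here since the paper, like you, takes $\xv\norm{\eemp-\strue}_p=O(n^{-1/2})$ as a standing input.
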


\begin{remark}
  This theorem effectively says that choosing an estimator
  in the ball centred around $\eemp$ cannot be too bad 
  assuming the niceness of log-concave measures.  It also
  tells us how fast we can shrink the ball as $n$ increases.
\end{remark}

A second and arguably more important issue, 
see the supplementary material,
in the setting of sparse covariance estimation is
that of support recovery or ``sparsistency'' 
\citep{LAMFAN2009,ROTHMAN2009}.  To recover the support of a 
covariance matrix---that is, determine which entries 
$\sigma_{i,j}\ne0$---we will require a class of sparse matrices
  from Equation~\ref{eqn:sparseClass}.
In past work, a notation of ``approximate sparsity'' is considered
where the first condition in $\mathcal{U}(\kappa,\delta)$ is replaced
with $\max_{i=1,\ldots,d}\sum_{i=1}^d \abs{\sigma_{i,j}}^q < \kappa$
for $q\in[0,1)$.  However, once we bound the non-zero entries 
away from zero by some $\delta$, such ``approximate sparsity''
implies standard sparsity with $q=0$.  It is worth noting that
the above Proposition~\ref{thm:logConConv} does not require 
such a sparsity class, because our estimator is forced to remain 
close enough to $\eemp$ to follow $\eemp$'s convergence to $\strue$.

\begin{theorem}
  \label{thm:logConSpar}
  Let $X_1,\ldots,X_n\in\real^d$ be iid with common
  measure $\mu$.  Let $\mu$ be strictly log concave 
  with some fixed constant $c_0$ from Definition~\ref{def:logConcave}.
  Furthermore, let $\strue \in \mathcal{U}(\kappa,\delta)$
  and let $\delta = O(n^{-1+\veps})$ for any $\veps>0$.
  Then, for $\espar$ denoting the concentration estimator 
  using the hard thresholding estimation from Section~\ref{sec:zeroMeth}
  with the operator norm metric,
  $$
    \lim_{n\rightarrow\infty}
    \prob{ \mathrm{supp}(\espar) \ne \mathrm{supp}(\strue)  } = 0
  $$
  where $\mathrm{supp}(\Sigma) = \{(i,j): \sigma_{i,j}\ne0\}$.
\end{theorem}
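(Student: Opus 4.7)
The plan is to split the event $\{\mathrm{supp}(\espar) \ne \mathrm{supp}(\strue)\}$ into a false-positive piece (true zero entries declared non-zero by $\espar$) and a false-negative piece (true non-zero entries thresholded to zero by $\espar$), and to bound each separately using the two main ingredients of the paper: Theorem~\ref{thm:falsepos} for the false positives and the log-concave concentration machinery of Section~\ref{sec:logconmeas} for the false negatives.

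For the false positives, I would instantiate the procedure of Section~\ref{sec:conConfSet} with a shrinking target false positive rate $\rho_n$, say $\rho_n = o(d^{-2})$, so that the expected number of off-diagonal false positive entries, namely $\rho_n \cdot d(d-1)/2$, tends to zero. Theorem~\ref{thm:falsepos} then guarantees that the realised false positive rate of the hard-thresholded output is within $K_1 n \rho_n^{1/2} d^{-1/4} + K_2 n \rho_n^{1/4} d^{-1/2} + o(n d^{-1/2})$ of the target, and under the high-dimensional regime $d \gg n$ that underlies the paper's framework these error terms also vanish. A Markov bound on the resulting integer-valued count then upgrades a vanishing mean to $\prob{\text{any false positive}} \to 0$.

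For the false negatives, the key observation is that the operator norm dominates individual entries: for any symmetric matrix $A$, one has $\abs{A_{i,j}} \le \norm{A}_\infty$. Applied to $\espar - \strue$ and combined with the triangle inequality,
\[
\abs{\esparij - \sigma_{i,j}} \le \norm{\espar - \strue}_\infty \le \norm{\espar - \eemp}_\infty + \norm{\eemp - \strue}_\infty \le r + \norm{\eemp - \strue}_\infty,
\]
where $r$ is the radius of the operator-norm ball produced by the procedure of Section~\ref{sec:conConfSet}. The first term $r$ is tied to $\rho_n$ via the construction and, for reasonable scalings of $\rho_n$, remains of order $n^{-1/2}$ up to logarithmic factors. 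The second term is bounded by applying the strongly log-concave concentration inequality with the $1$-Lipschitz functional $\phi$ extracting the operator norm of the centred empirical sum, giving $\norm{\eemp - \strue}_\infty = O_P(n^{-1/2})$ in this dimension-free setting. Under the signal-strength assumption on $\delta$, both contributions are eventually strictly less than $\delta/2$, so every pair with $\abs{\sigma_{i,j}} \ge \delta$ satisfies $\abs{\esparij} \ge \delta/2 > 0$ and therefore survives the hard thresholding.

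A union bound over the $O(\kappa d)$ true non-zero pairs, justified by $\strue \in \mathcal{U}(\kappa,\delta)$, then combines the two pieces to yield $\prob{\mathrm{supp}(\espar) \ne \mathrm{supp}(\strue)} \to 0$. The main obstacle I anticipate is the joint calibration: $\rho_n$ must be small enough to drive both the expected false-positive count and the dimensional error terms of Theorem~\ref{thm:falsepos} to zero, while the resulting confidence-ball radius $r$ must simultaneously remain small relative to the signal floor $\delta$ so that the false-negative argument goes through. Reconciling these two demands under the given scaling of $\delta$ against $n$ and $d$ is where the real bookkeeping of the proof will concentrate.
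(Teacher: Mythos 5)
Your decomposition into false positives and false negatives is natural, but both halves diverge from the paper's argument, and the false-negative half contains a step that fails under the theorem's hypotheses. The paper instead introduces the perfectly-thresholded matrix $\estar = \{\hat{\sigma}_{i,j}\Indc{\sigma_{i,j}\ne0}\}$ and writes the error event as the union of the two events ``$\estar$ lies outside the confidence ball'' and ``some strictly sparser $\etild$ lies inside it,'' controlling the first via log-concave concentration of $\norm{\eemp-\strue}_\infty^{1/2}$ together with the Gershgorin circle theorem, and the second by isolating a single missing entry $\hat{\sigma}_{i_0,j_0}$.

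The fatal step in your proposal is the false-negative bound $\abs{\esparij-\sigma_{i,j}} \le r + \norm{\eemp-\strue}_\infty$ followed by the claim that both terms eventually fall below $\delta/2$. Since $\norm{\eemp-\strue}_\infty$ lives at scale $O_P(n^{-1/2})$ (this is exactly what the Lipschitz-plus-log-concave machinery delivers, and passing an entrywise comparison through the global operator norm cannot improve on it), your argument requires $\delta \gg n^{-1/2}$. The theorem, however, only assumes $\delta = O(n^{-1+\veps})$, which for small $\veps$ is far below $n^{-1/2}$; the remark following the theorem makes clear that beating the $n^{-1/2}$ signal floor of prior work is precisely the point of the result. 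The paper escapes this because its metric is $\norm{\cdot}_\infty^{1/2}$, so the confidence ball has operator-norm radius $r_\alpha^2 = O(n^{-1})$, and because the false-negative event is reduced to $\prob{\hat{\sigma}_{i_0,j_0}\le r_\alpha^2}$ for one fixed entry rather than to a uniform entrywise deviation bound. Secondarily, your use of Theorem~\ref{thm:falsepos} for the false positives is a misapplication: that result calibrates a ratio of \emph{expected} operator norms to the target rate $\rho$; it supplies no moment bound on the realized count of false positives for Markov's inequality to act on, and its error terms grow linearly in $n$, so they do not vanish in the $n\to\infty$ limit that the theorem asserts. The paper's false-positive control comes instead from showing $\prob{\norm{\estar-\eemp}_\infty^{1/2}\ge r_\alpha}\to 0$, so that the sparsest element of the ball can be no less sparse than $\estar$.
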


\begin{remark}
Note that the condition that $\delta = O(n^{-1+\veps})$ allows
for a much quicker decay of the non-zero entries of $\strue$
than in \cite{KAROUI2008} where the lower bound is of the 
form $Cn^{-\alpha_0}$ with $0<\alpha_0<1/2$.  It is also 
much quicker than  the similar rate achieved in
\cite{ROTHMAN2009} where the lower bound is any $\tau$ such 
that $\sqrt{n}\tau$ increases faster than $\sqrt{\log(d)}$ 
with the enforced asymptotic condition that $\log(d)/n = o(1)$
resulting in a rate no faster than $n^{-1/2}$. 
Though, it is worth noting that if $\delta$ decays to zero at
a faster rate, then the above convergence rate for support recovery 
slows as can be seen in the proof.
\end{remark}

\section{Numerical simulations}
\label{sec:numeric}

In the following subsections, we apply the methods from the 
previous sections to three multivariate distributions of interest:
the Gaussian, Laplace, and Rademacher distributions.
In doing so, we apply Theorem~\ref{thm:falsepos} to analytically
determine the ideal confidence ball radius in order to construct
a sparse estimator of $\strue$.  We also compare the support 
recovery of our approach against penalized estimators and 
standard application of universal threshold estimators.

As mentioned before, our proposed concentration confidence set
based method has a similar feel to regularized / penalized estimators
as the larger the constructed confidence set is, the sparser the
returned estimator will be.
Thus, we compare our approach with the following lasso style 
estimator
from the R package \texttt{PDSCE} \citep{PDSCE},
which optimizes 
$$
  \hat{\Sigma}^\text{PDS} = 
  \argmin{\Sigma\ge0}\left\{
    \norm{\Sigma - \eemp}_2 - \tau \log \det( \Sigma )  +
    \lambda\norm{\Sigma}_{\ell^1}
  \right\}
$$
with $\tau,\lmb>0$.
Here, the $\log \det$ term is used to enforce positive 
definiteness of the final solution, and $\norm{\cdot}_{\ell^1}$
is the lasso style penalty, which enforces sparsity.

The similar method from the R package \texttt{spcov} \citep{SPCOV},
which
uses a majorize-minimize algorithm to determine 
$$
  \hat{\Sigma}^\text{MMA} = 
  \argmin{\Sigma\ge0}\left\{
    \tr{ \eemp \Sigma^{-1} } - \log \det(\Sigma^{-1}) + 
    \lambda\norm{\Sigma}_{\ell^1}
  \right\}
$$
for some penalization $\lmb>0$, was also considered but
proved to run too slowly on high dimensional 
matrices---that is, $d\ge200$---to be included in the numerical experiments.

Of course, we also compare our method against the four universal thresholding
estimators applied to the empirical covariance 
matrix from \citep{ROTHMAN2009},
Hard, Soft, SCAD, and Adaptive LASSO:
\begin{align*}
  &\hat{\Sigma}^\text{Hard}_\lmb
  &&= \{ \hat{\sigma}_{i,j}\Indc{\hat{\sigma}_{i,j}>\lmb} \}_{i,j}\\
  &\hat{\Sigma}^\text{SCAD}_\lmb
  &&= \left\{\begin{array}{ll}
    \hat{\sigma}_{i,j}^\text{Soft} & 
    \text{ for }\hat{\sigma}_{i,j}\le2\lmb\\
    \frac{a-1}{a-2}(\hat{\sigma}_{i,j}-2\lmb)+\lmb &
    \text{ for } 2\lmb < \hat{\sigma}_{i,j}\le a\lmb\\
    \hat{\sigma}_{i,j}^\text{Hard} & 
    \text{ for }\hat{\sigma}_{i,j}>a\lmb
  \end{array}\right. \\
  &\hat{\Sigma}^\text{Soft}_\lmb
  &&= \{ 
    \text{sign}(\hat{\sigma}_{i,j})(\abs{\hat{\sigma}_{i,j}}-\lmb)_+
  \}_{i,j} \\
  &\hat{\Sigma}^\text{Adpt}_\lmb
  &&= \{ 
    \text{sign}(\hat{\sigma}_{i,j})(
      \abs{\hat{\sigma}_{i,j}}-\lmb^{\eta+1}\abs{\hat{\sigma}_{i,j}}^{-\eta}
    )_+
  \}_{i,j}\\
\end{align*} 
where $\hat{\sigma}_{i,j}$ is the $(i,j)$th entry of the empirical 
covariance estimate, $a=3.7$, and $\eta=1$. The parameter $\lmb>0$ 
is the threshold, 
which is chosen in practice via cross validation with respect to
the Hilbert-Schmidt norm.
Briefly,
the data is split in half, two empirical estimators are formed,
one is thresholded, and $\lmb$ is selected to minimize the
Hilbert-Schmidt distance
between the one empirical estimate and the other thresholded estimate.

\subsection{Multivariate Gaussian Data}
\label{sec:simGaussian}

Let $X_1,\ldots,X_n\in\real^d$ be \iid mean zero random vectors
with a strictly log concave measure and covariance matrix $\Sigma$.
By Corollary~S4.5,
there exists a constant
$c_0>0$ such that 
$
  \prob{ 
    \phi({\bf X}) \ge \xv \phi({\bf X}) + r
  } \le
  \ee^{-nr^2/2c_0}
$
where 
$
  \phi({\bf X}) = \norm{\eemp - \Sigma}_p^{1/2}
$ where $\eemp=n^{-1}\sum_{i=1}^n (X_i-\bar{X})\TT{(X_i-\bar{X})}$ is the
empirical estimate of the covariance matrix. This
results in the size $1-\alpha$ confidence set for $\Sigma$
$$
  \mathcal{C}_{1-\alpha} = \left\{
    \Sigma: 
    \norm{\eemp - \Sigma}_p^{1/2} \le 
    \xv\norm{\eemp - \Sigma}_p^{1/2} +
    \sqrt{ (-2c_0/n) \log \alpha }
  \right\}
$$
for $\alpha \in (0,1)$.  In the notation of 
Section~\ref{sec:logconmeas}, 
$r_\alpha = \sqrt{ (-2c_0/n) \log \alpha }$.

In the multivariate Gaussian case, $c_0$ is the maximal 
eigenvalue of the covariance matrix $\Sigma$.
As mentioned before,
we avoid any issues of estimating $c_0$ in practice.
Regardless of our choice for $c_0$ tuning the regularization
parameter $\alpha$ to a specific false positive rate 
negates the need for an accurate estimate of $c_0$.  

Table~\ref{tab:fptpGauss} displays false positive and 
true positive percentages for seven sparse estimators 
computed over 100 replications
of a random sample of size $n=50$ of $d=50,100,200,500$
dimensional 
multivariate Gaussian 
data with a tri-diagonal covariance matrix $\Sigma$ whose
diagonal entries are 1 and whose off-diagonal entries are 0.3.
We can clearly see that the concentration-based estimator  
approaches the desired false positive rate---either 1\% or 5\%---as
the dimension increases.
In contrast, the thresholding estimators with 
threshold $\lmb$ chosen via cross validation generally start with 
higher false positive percentages, which tend to zero 
as the dimension increases.  As noted in previous work, hard 
thresholding is overly aggressive.   
The PDS method is very stable across changes in the dimension
and maintains a constant 3.4\% false positive rate and 50\%
true positive rate.

\begin{table}
  \begin{center}
  \begin{tabular}{l rrrr c rrrr}
       \hline
        & \multicolumn{4}{c}{False Positive \%} &&  
        \multicolumn{4}{c}{True Positive \%}  \\
    Dimension & 50 & 100 & 200 & 500 && 50 & 100 & 200 & 500 \\
    \hline
    CoM 1\% & 0.0&0.1&0.3&1.0&& 0.0& 7.7&20.7&32.0 \\
    CoM 5\% & 1.0&2.2&3.5&4.7&&33.1&42.9&51.5&56.0 \\
    PDS     & 3.4&3.4&3.4&3.4&&50.0&50.0&51.5&50.6 \\
    Hard    & 0.0&0.0&0.0&0.0&& 0.3& 0.0& 0.0& 0.0 \\
    Soft    & 2.0&0.7&0.2&0.0&&38.5&25.4&16.2& 7.5 \\
    SCAD    & 2.1&0.7&0.3&0.0&&39.0&26.0&16.4& 7.5 \\
    Adpt    & 0.3&0.1&0.0&0.0&&17.4&10.0& 5.8& 2.0 \\
       \hline
  \end{tabular}
  \end{center}
  \capt{
    \label{tab:fptpGauss}
    Percentage of false and true positives for multivariate Gaussian data
    and $\strue$ tri-diagonal with diagonal entries 1 and off-diagonal
    entries 0.3.
  }
\end{table}

\begin{figure}
  \begin{center}
    \includegraphics[width=0.475\textwidth]{\PICDIR/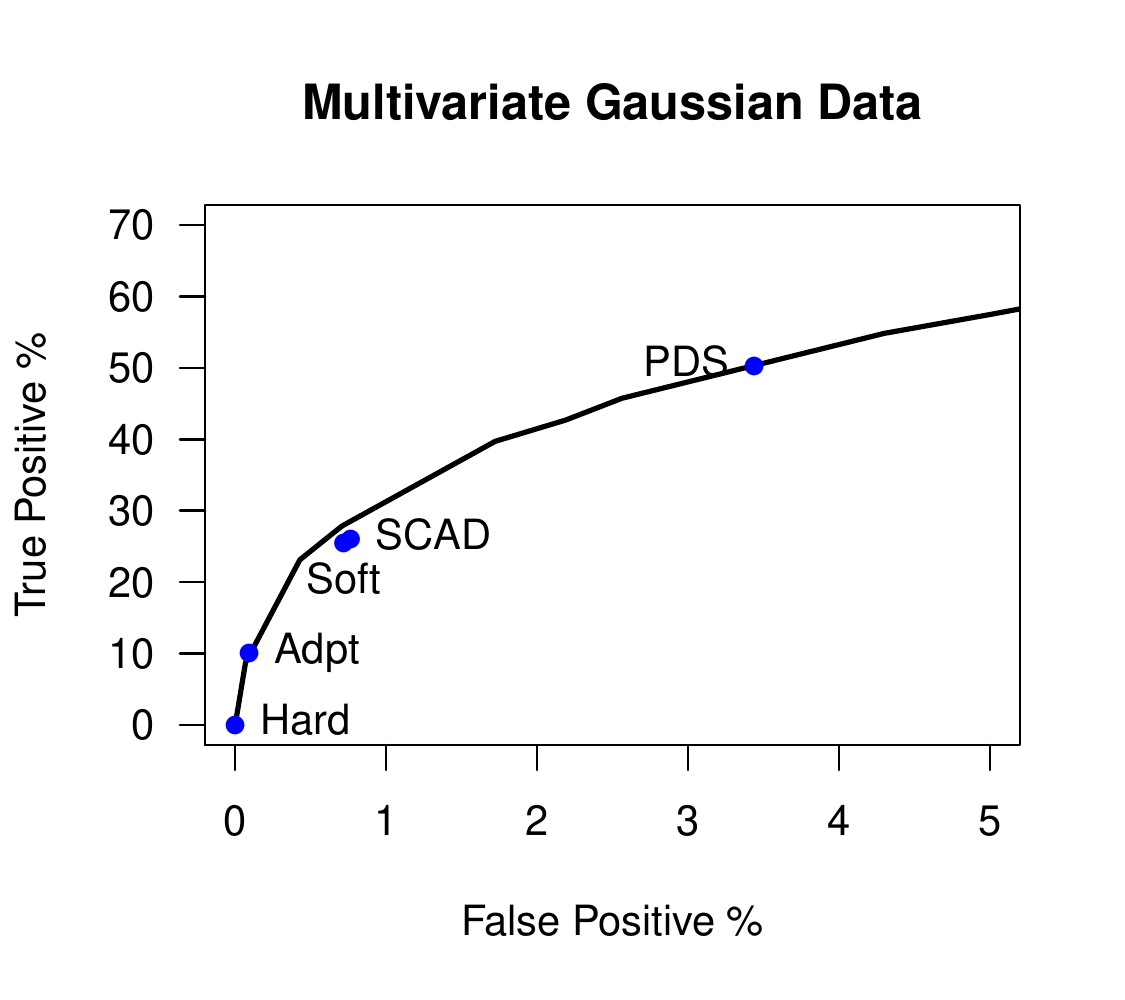}
    \includegraphics[width=0.475\textwidth]{\PICDIR/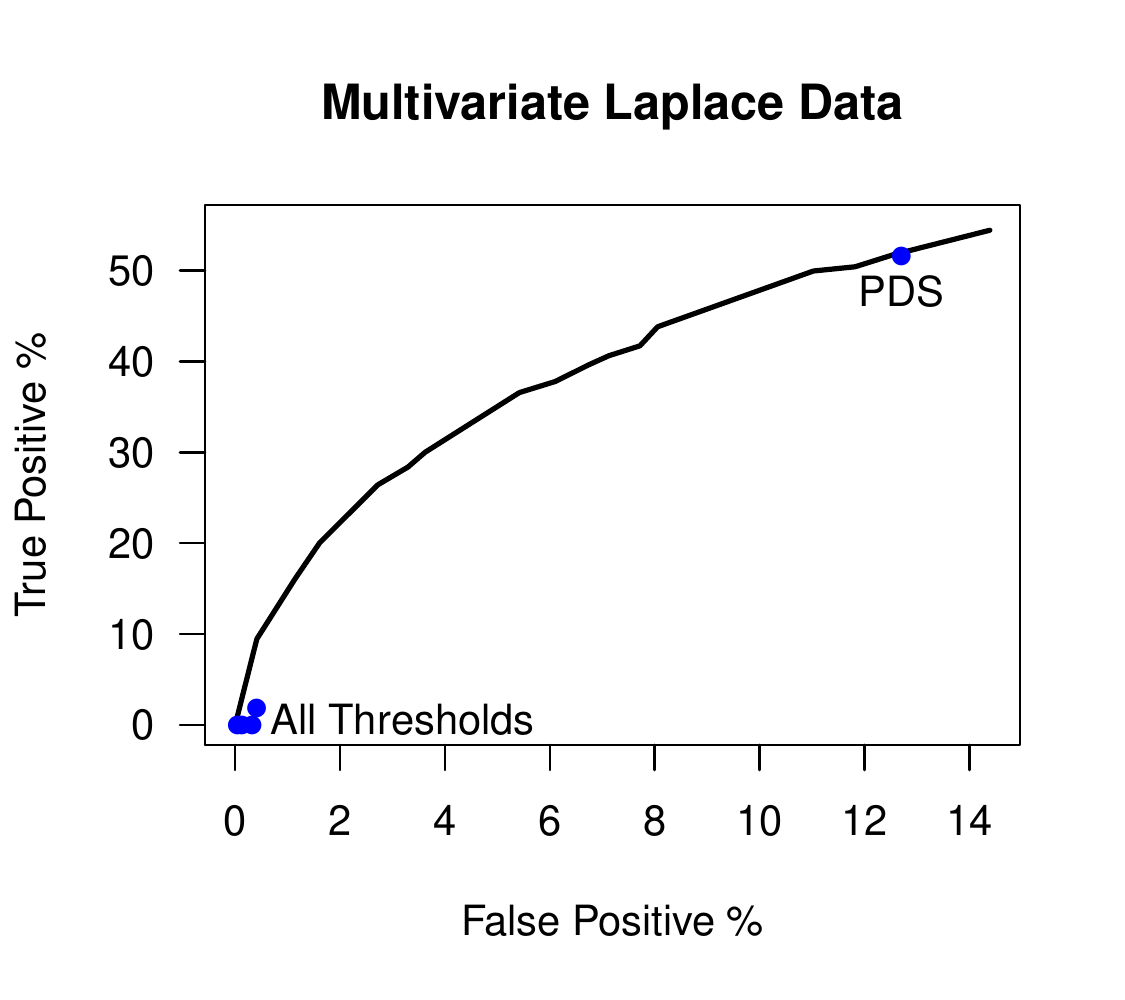}\\
  \end{center}
  \capt{
    \label{fig:fptpBoth}
    A line demarcating the trade-off between false and true positive
    recoveries for multivariate Gaussian (left) and Laplace (right)
    data from 100 replications of sample size $n=50$ and 
    dimension $d=100$.  
  }
\end{figure}

\subsection{Multivariate Laplace Data}

There are many possible ways to extend the univariate Laplace
distribution, also referred to as the double exponential distribution,
onto $\real^d$.  For the following simulation study,
we choose the extension detailed in \cite{ELTOFT2006}.
Namely, let $Z \dist \distNormal{0}{\sigma^2}$ and let
$V\dist\distExp{1}$.  Then, 
$X = \sqrt{V}Z \dist\distLaplace{\sigma/\sqrt{2}}$, which 
has pdf $f(x) = \sqrt{2}\sigma^{-1}\exp(-\sqrt{2}\abs{x}/\sigma )$
and variance $\var{X}=\sigma^2$.  For the multivariate setting,
now let $Z\in\real^d$ be multivariate Gaussian with zero mean 
and covariance $\strue$ and, once again, let $V\dist\distExp{1}$.
Then, we declare $X = \sqrt{V}Z$ to have a multivariate 
Laplace distribution with zero mean and covariance $\strue$.

Table~\ref{tab:fptpExp} displays false positive and 
true positive percentages for seven sparse estimators 
computed over 100 replications
of a random sample of size $n=50$ of $d=50,100,200,500$
dimensional 
multivariate Laplace
data with a tri-diagonal covariance matrix $\Sigma$ whose
diagonal entries are 1 and whose off-diagonal entries are 0.3.
Similarly to the previous setting, the concentration-based estimator  
approaches the desired false positive rate---either 1\% or 5\%---as
the dimension increases.
All universal thresholding estimators 
set most of the entries in the matrix to zero when 
threshold $\lmb$ chosen via cross validation.  
The PDS method
is still stable across changes in the dimension but fixates on
a much higher false positive rate around 12.5\% and a 
similar true positive rate of 51\%.

\begin{table}
  \begin{center}
  \begin{tabular}{l rrrr c rrrr}
    \hline
        & \multicolumn{4}{c}{False Positive \%} &&  
        \multicolumn{4}{c}{True Positive \%}  \\
    Dimension & 50 & 100 & 200 & 500 && 50 & 100 & 200 & 500 \\
    \hline
    CoM 1\% & 0.2& 0.4& 0.7& 1.1&&  4.5& 9.2&13.0&17.2\\
    CoM 5\% & 2.2& 3.3& 4.1& 4.7&& 22.8&29.3&32.1&34.1\\
    PDS     &12.4&12.7&12.2&12.2&& 51.0&51.5&51.0&51.2\\
    Hard    & 0.1& 0.0& 0.0& 0.0&&  0.0& 0.0& 0.0&0.0\\
    Soft    & 1.2& 0.4& 0.2& 0.0&& 11.3& 1.8& 0.0&0.0\\
    SCAD    & 0.8& 0.3& 0.2& 0.0&&  8.6& 0.0& 0.0&0.0\\
    Adpt    & 0.2& 0.1& 0.1& 0.0&&  0.0& 0.0& 0.0&0.0\\
    \hline
  \end{tabular}
  \end{center}
  \capt{
    \label{tab:fptpExp}
    Percentage of false and true positives for multivariate Laplace data
    and $\strue$ tri-diagonal with diagonal entries 1 and off-diagonal
    entries 0.3.
  }
\end{table}


\subsection{Small Round Blue-Cell Tumour Data}

Following the same analysis performed in 
\cite{ROTHMAN2009} and subsequently in \cite{CAILUI2011},
we will consider the data set resulting from the
small round blue-cell tumour (SRBCT) microarray 
experiment~\citep{KHAN2001}.  The data set consists of
a training set of 64 vectors containing 2308 gene expressions.
The data contains four types of tumours denoted 
EWS, BL-NHL, NB, and RMS.
As performed in the two previous papers, the genes are ranked
by their respective amount of discriminative information 
according to their $F$-statistic
$$
  F = \frac{
    \frac{1}{k-1}\sum_{m=1}^k n_m(\bar{x}_m-\bar{x})^2
  }{
    \frac{1}{n-k}\sum_{m=1}^k (n_m-1)\hat{\sigma}_m^2
  }
$$
where $\bar{x}$ is the sample mean,
$k=4$ is the number of classes, $n=64$ is the sample size,
$n_m$ is the sample size of class $m$, and likewise, 
$\bar{x}_m$ and $\hat{\sigma}_m^2$ are, respectively,
the sample mean and variance of class $m$.
The top 40 and bottom 160 scoring genes were selected
to provide a mix of the most and least informative genes.

Table~\ref{tab:sbct} displays the results of applying
the four threshold estimators with cross validation, 
the PDS method, and our concentration-based thresholding
with the sub-Gaussian formula and 
with false positive rates of 10, 5, and 1 percent.  
The percentage of matrix entries that are retained for the 
most informative $40\times40$ block and the least informative
block are tabulated.  Depending on the chosen false positive rate,
our concentration-based estimators give similar results to
Soft and SCAD thresholding.
PDS is the least conservative of the methods as it keeps the
most entries.  Hard and Adaptive LASSO thresholding are the 
most aggressive methods.

\begin{table}
  \begin{center}
  \begin{tabular}{lcccc}
  \hline
non-zero (\%) & CoM 10\% & CoM 5\% & CoM 1\% & PDS \\
Informative   & 30.3\% & 25.6\% &  8.5\% & 47.3\% \\
Uninformative &  5.4\% &  2.7\% &  0.4\% & 15.6\% \\
  \hline
\phn & Hard & Soft & SCAD & Adpt \\
Informative   & 6.0\%&  24.7\%&  21.3\%&  9.9\%\\
Uninformative & 0.3\%&   2.3\%&   1.8\%&  0.7\%\\
  \hline
  \end{tabular}
  \end{center}
  \capt{
    \label{tab:sbct}
    The percentages of non-zero off-diagonal entries in the six 
    covariance estimates partitioned into two parts: the informative
    $40\times40$ block of the highest scoring genes;
    the uninformative remaining matrix entries.
  }
\end{table}

It is also worth noting that our method is computationally efficient enough
to consider the entire $2308\times2308$ matrix at once.
In fact, it took only 131.3 seconds to compute $\espar$
on an Intel i7-7567U CPU, 3.50GHz.  
In contrast, the PDS
method, which still has significantly faster run times than 
cross validating the threshold estimators, took over 101 minutes
to finish.
False positive rates of 5\%, 1\%, and 0.1\%
were tested.  The fraction of non-zero entries in 
$\espar$ was 8.6\%, 2.0\%, and 0.22\%, respectively. 
For comparison, the fraction of 
non-zero entries retained by PDS was 17.7\%.  
If such an analysis is meant to lead to follow-up research
on specific gene pairings, then culling as many false positives
as possible is of critical importance.
The sparse covariance estimator was 
partitioned into $12\times12$
blocks and the number of non-zero entries was tabulated for
each.  The results are displayed in Figure~\ref{fig:sbct}.

\begin{figure}
  \includegraphics[width=0.32\textwidth]{\PICDIR/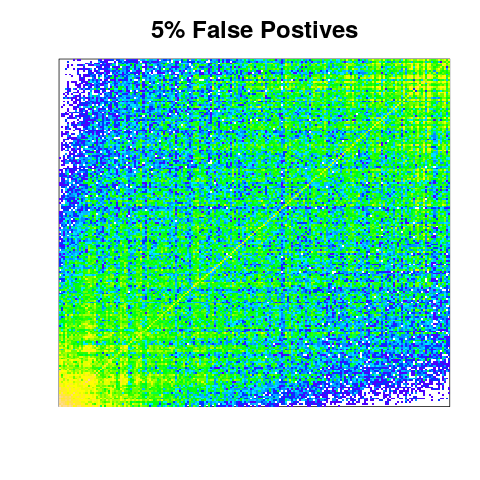}
  \includegraphics[width=0.32\textwidth]{\PICDIR/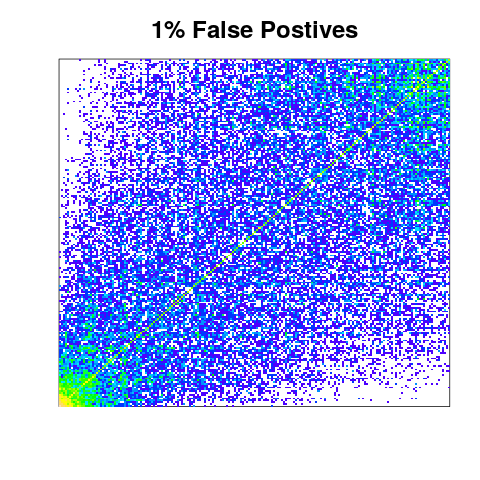}
  \includegraphics[width=0.32\textwidth]{\PICDIR/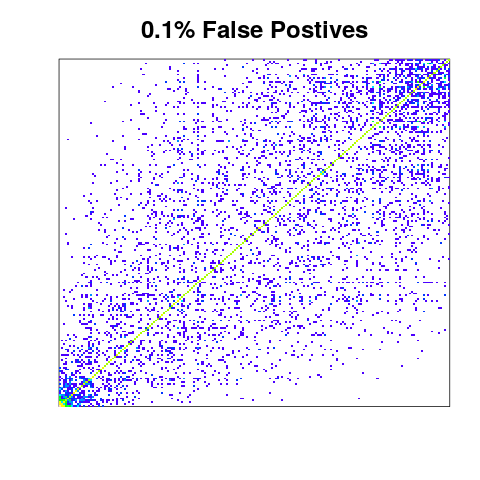}
  \capt{
    \label{fig:sbct} 
    A density plot of the number of non-zero 
    entries in $\espar\in\real^{2308\times2308}$ partitioned
    into $12\times12$ blocks for false positive rates of 
    5\%, 1\%, and 0.1\%.
  }
\end{figure}

\section{Supplementary Material}

The supplementary material consists of five sections.
The first parallels Section~\ref{sec:logconmeas} and 
considers sub-exponential measures and bounded random
variables as well as some additional simulations for 
multivariate Rademacher random variables. 
The second contains proofs of the lemmas and theorems
presented in the main article.  The third contains 
additional simulations motivating why our support 
recovery approach is better than past approaches.
The fourth contains derivations of Lipschitz 
coefficients for the functions used in Section~4.
The fifth is expository and contains past results 
from the concentraton of measure literature that
were directly used in this work.

\bibliographystyle{plainnat}
\bibliography{kasharticle,kashbook,kashself,kashpack}

\begin{thebibliography}{33}
\providecommand{\natexlab}[1]{#1}
\providecommand{\url}[1]{\texttt{#1}}
\expandafter\ifx\csname urlstyle\endcsname\relax
  \providecommand{\doi}[1]{doi: #1}\else
  \providecommand{\doi}{doi: \begingroup \urlstyle{rm}\Url}\fi

\bibitem[Bakry and {\'E}mery(1984)]{BAKRYEMERY1984}
Dominique Bakry and Michel {\'E}mery.
\newblock Hypercontractivit{\'e} de semi-groupes de diffusion.
\newblock \emph{Comptes rendus des s{\'e}ances de l'Acad{\'e}mie des sciences.
  S{\'e}rie 1, Math{\'e}matique}, 299\penalty0 (15):\penalty0 775--778, 1984.

\bibitem[Bickel and Levina(2008{\natexlab{a}})]{BICKELLEVINA2008}
Peter~J Bickel and Elizaveta Levina.
\newblock Covariance regularization by thresholding.
\newblock \emph{The Annals of Statistics}, pages 2577--2604,
  2008{\natexlab{a}}.

\bibitem[Bickel and Levina(2008{\natexlab{b}})]{BICKELLEVINA2008A}
Peter~J Bickel and Elizaveta Levina.
\newblock Regularized estimation of large covariance matrices.
\newblock \emph{The Annals of Statistics}, pages 199--227, 2008{\natexlab{b}}.

\bibitem[Bien and Tibshirani(2012)]{SPCOV}
Jacob Bien and Rob Tibshirani.
\newblock \emph{spcov: Sparse Estimation of a Covariance Matrix}, 2012.
\newblock URL \url{https://CRAN.R-project.org/package=spcov}.
\newblock R package version 1.01.

\bibitem[Bien and Tibshirani(2011)]{BIEN2011}
Jacob Bien and Robert~J Tibshirani.
\newblock Sparse estimation of a covariance matrix.
\newblock \emph{Biometrika}, 98\penalty0 (4):\penalty0 807--820, 2011.

\bibitem[Bobkov and Ledoux(1997)]{BOBKOV1997}
Sergey Bobkov and Michel Ledoux.
\newblock {P}oincar{\'e}'s inequalities and {T}alagrand's concentration
  phenomenon for the exponential distribution.
\newblock \emph{Probability Theory and Related Fields}, 107\penalty0
  (3):\penalty0 383--400, 1997.

\bibitem[Boucheron et~al.(2013)Boucheron, Lugosi, and Massart]{BOUCHERON2013}
St{\'e}phane Boucheron, G{\'a}bor Lugosi, and Pascal Massart.
\newblock \emph{Concentration inequalities: A nonasymptotic theory of
  independence}.
\newblock Oxford University Press, 2013.

\bibitem[Cai and Liu(2011)]{CAILUI2011}
Tony Cai and Weidong Liu.
\newblock Adaptive thresholding for sparse covariance matrix estimation.
\newblock \emph{Journal of the American Statistical Association}, 106\penalty0
  (494):\penalty0 672--684, 2011.

\bibitem[Daniels and Kass(1999)]{DANIELS1999}
Michael~J Daniels and Robert~E Kass.
\newblock Nonconjugate {B}ayesian estimation of covariance matrices and its use
  in hierarchical models.
\newblock \emph{Journal of the American Statistical Association}, 94\penalty0
  (448):\penalty0 1254--1263, 1999.

\bibitem[Daniels and Kass(2001)]{DANIELS2001}
Michael~J Daniels and Robert~E Kass.
\newblock Shrinkage estimators for covariance matrices.
\newblock \emph{Biometrics}, 57\penalty0 (4):\penalty0 1173--1184, 2001.

\bibitem[Dey and Srinivasan(1985)]{DEY1985}
Dipak~K Dey and C~Srinivasan.
\newblock Estimation of a covariance matrix under {S}tein's loss.
\newblock \emph{The Annals of Statistics}, pages 1581--1591, 1985.

\bibitem[El~Karoui(2008)]{KAROUI2008}
Noureddine El~Karoui.
\newblock Operator norm consistent estimation of large-dimensional sparse
  covariance matrices.
\newblock \emph{The Annals of Statistics}, pages 2717--2756, 2008.

\bibitem[Eltoft et~al.(2006)Eltoft, Kim, and Lee]{ELTOFT2006}
Torbj{\o}rn Eltoft, Taesu Kim, and Te-Won Lee.
\newblock On the multivariate {L}aplace distribution.
\newblock \emph{IEEE Signal Processing Letters}, 13\penalty0 (5):\penalty0
  300--303, 2006.

\bibitem[Gin{\'e} and Nickl(2016)]{GINENICKL2015}
Evarist Gin{\'e} and Richard Nickl.
\newblock \emph{Mathematical Foundations of Infinite-Dimensional Statistical
  Models}.
\newblock Cambridge University Press, 2016.

\bibitem[Gozlan(2010)]{GOZLAN2010}
Nathael Gozlan.
\newblock Poincar{\'e} inequalities and dimension free concentration of
  measure.
\newblock In \emph{Annales de l'institut Henri Poincar{\'e} (B)}, volume~46,
  pages 708--739, 2010.

\bibitem[Gross(1975)]{GROSS1975}
Leonard Gross.
\newblock Logarithmic {S}obolev inequalities.
\newblock \emph{American Journal of Mathematics}, 97\penalty0 (4):\penalty0
  1061--1083, 1975.

\bibitem[Haff(1980)]{HAFF1980}
LR~Haff.
\newblock Empirical {B}ayes estimation of the multivariate normal covariance
  matrix.
\newblock \emph{The Annals of Statistics}, pages 586--597, 1980.

\bibitem[Hoff(2009)]{HOFF2009}
Peter~D Hoff.
\newblock A hierarchical eigenmodel for pooled covariance estimation.
\newblock \emph{Journal of the Royal Statistical Society: Series B (Statistical
  Methodology)}, 71\penalty0 (5):\penalty0 971--992, 2009.

\bibitem[Iserles(2009)]{ISERLES2009}
Arieh Iserles.
\newblock \emph{A first course in the numerical analysis of differential
  equations}.
\newblock Number~44. Cambridge university press, 2009.

\bibitem[Johnstone(2001)]{JOHNSTONE2001}
Iain~M Johnstone.
\newblock On the distribution of the largest eigenvalue in principal components
  analysis.
\newblock \emph{Annals of Statistics}, pages 295--327, 2001.

\bibitem[Johnstone and Lu(2012)]{JOHNSTONE2012}
Iain~M Johnstone and Arthur~Yu Lu.
\newblock On consistency and sparsity for principal components analysis in high
  dimensions.
\newblock \emph{Journal of the American Statistical Association}, 2012.

\bibitem[Kashlak et~al.(2018)Kashlak, Aston, and Nickl]{KASHLAK_COV2018}
Adam~B. Kashlak, John A.~D. Aston, and Richard Nickl.
\newblock Inference on covariance operators via concentration inequalities:
  k-sample tests, classification, and clustering via rademacher complexities.
\newblock \emph{Sankhya A}, 04 2018.
\newblock \doi{10.1007/s13171-018-0143-9}.

\bibitem[Khan et~al.(2001)Khan, Wei, Ringner, Saal, Ladanyi, Westermann,
  Berthold, Schwab, Antonescu, Peterson, et~al.]{KHAN2001}
Javed Khan, Jun~S Wei, Markus Ringner, Lao~H Saal, Marc Ladanyi, Frank
  Westermann, Frank Berthold, Manfred Schwab, Cristina~R Antonescu, Carsten
  Peterson, et~al.
\newblock Classification and diagnostic prediction of cancers using gene
  expression profiling and artificial neural networks.
\newblock \emph{Nature medicine}, 7\penalty0 (6):\penalty0 673--679, 2001.

\bibitem[Lam and Fan(2009)]{LAMFAN2009}
Clifford Lam and Jianqing Fan.
\newblock Sparsistency and rates of convergence in large covariance matrix
  estimation.
\newblock \emph{Annals of Statistics}, 37\penalty0 (6B):\penalty0 4254, 2009.

\bibitem[Lata{\l}a(2005)]{LATALA2005}
Rafa{\l} Lata{\l}a.
\newblock Some estimates of norms of random matrices.
\newblock \emph{Proceedings of the American Mathematical Society}, 133\penalty0
  (5):\penalty0 1273--1282, 2005.

\bibitem[Ledoit and Wolf(2004)]{LEDOIT2004}
Olivier Ledoit and Michael Wolf.
\newblock A well-conditioned estimator for large-dimensional covariance
  matrices.
\newblock \emph{Journal of multivariate analysis}, 88\penalty0 (2):\penalty0
  365--411, 2004.

\bibitem[Ledoux(2001)]{LEDOUX2001}
Michel Ledoux.
\newblock \emph{The concentration of measure phenomenon}, volume~89.
\newblock American Mathematical Soc., 2001.

\bibitem[Pourahmadi(2011)]{POURAHMADI2011}
Mohsen Pourahmadi.
\newblock Covariance estimation: the {GLM} and regularization perspectives.
\newblock \emph{Statistical Science}, pages 369--387, 2011.

\bibitem[Rothman(2012)]{ROTHMAN2012}
Adam~J Rothman.
\newblock Positive definite estimators of large covariance matrices.
\newblock \emph{Biometrika}, 99\penalty0 (3):\penalty0 733--740, 2012.

\bibitem[Rothman(2013)]{PDSCE}
Adam~J. Rothman.
\newblock \emph{PDSCE: Positive definite sparse covariance estimators}, 2013.
\newblock URL \url{https://CRAN.R-project.org/package=PDSCE}.
\newblock R package version 1.2.

\bibitem[Rothman et~al.(2009)Rothman, Levina, and Zhu]{ROTHMAN2009}
Adam~J Rothman, Elizaveta Levina, and Ji~Zhu.
\newblock Generalized thresholding of large covariance matrices.
\newblock \emph{Journal of the American Statistical Association}, 104\penalty0
  (485):\penalty0 177--186, 2009.

\bibitem[Stein(1975)]{STEIN1975}
Charles Stein.
\newblock Estimation of a covariance matrix.
\newblock \emph{Rietz Lecture}, 1975.

\bibitem[Tao(2012)]{TAO2012}
Terence Tao.
\newblock \emph{Topics in random matrix theory}, volume 132.
\newblock American Mathematical Soc., 2012.

\end{thebibliography}

\appendix
\section{Sub-Exponential and Bounded Data}

In line with our discussion of log concanve measures
in the main article, we include some information on
sub-exponential measures and data that is bounded.

\subsection{Sub-Exponential Distributions}
\label{sec:subexponential}

Compared with the previously discussed measures with sub-Gaussian 
concentration, there exists a larger class of measures with
sub-exponential concentration.  Such measures can be specified
as those that satisfy the Poincar{\'e} or spectral gap inequality
\citep{BOBKOV1997,LEDOUX2001,GOZLAN2010}.
  For a random variable $X$ on $\real^d$ with measure $\mu$, 
  this is
  $$
    \var{f(X)} \le C\int\abs{\nabla f}^2d\mu
  $$
  for some $C>0$ and for all locally Lipschitz functions $f$.

  If $X$ satisfies such an inequality, then---see 
  Theorem~S4.6
  or Chapter 5 of \cite{LEDOUX2001}---for
  for $X_1,\ldots,X_n\in\real^d$ iid copies of $X$ and for
  some Lipschitz function $\phi:\real^{d\times n}\rightarrow\real$,
  $$
    \prob{
      \phi(X_1,\ldots,X_n) \ge \xv\phi(X_1,\ldots,X_n) + r
    } \le
    \exp\left(
      -\frac{1}{K} \min\left\{\frac{r}{b},\frac{r^2}{a^2}\right\}
    \right)
  $$
  where $K>0$ in a constant depending only on $C$ and 
  $$
    a^2 \ge \sum_{i=1}^n \abs{\nabla_i \phi}^2,~~~~~
    b   \ge \max_{i=1,\ldots,n} \abs{\nabla_i \phi}.
  $$

As in the log concave setting discussed in the main paper, 
$\phi$ is chosen to be 
$$
  \phi(X_1,\ldots,X_n) = 
  \norm*{ 
    \frac{1}{n}\sum_{i=1}^n (X_i-\xv X)\TT{(X_i-\xv X)}
  }_p^{1/2},
$$
which is Lipschitz with constant $n^{-1/2}$.  This results in 
values of $a^2 = 1$ and $b = n^{-1/2}$ for the above coefficients.
Hence, the radius in this setting is computed to be
$r_\alpha = \max\{ -K\log\alpha/\sqrt{n}, \sqrt{-K\log\alpha} \}$.
While an optimal (or reasonable) value for $K$ may not be known,
it makes little difference given the proposed 
procedure for choosing $\alpha$ detailed in the main paper
for a desired false positive rate.  This is because the
term $-K\log\alpha$ will be equivalently tuned to determine 
the optimal size of the constructed confidence set.

As $r_\alpha$ in this setting is bounded below 
by a constant $\sqrt{-K\log\alpha}$, we do not achieve the 
nice convergence results as in the log concave setting.
However, the dimension-free concentration still allows for 
good performance in simulation settings as was seen in 
Section~6.

\subsection{Bounded Random Variables}
\label{sec:boundedrv}

In this section, we consider random variables that are bounded
in some norm.  Consider a Banach space $(B,\norm{\cdot})$ and
a collection of iid random variables $X_1,\ldots,X_n\in B$
such that $\norm{X_i}\le U$ for all $i=1,\ldots,n$.
Given only this assumption, the bounded differences inequality,
detailed in the supplementary material 
and in Section 3.3.4 of
\cite{GINENICKL2015}, can be applied in this specific setting.
It provides sub-Gaussian concentration for such random variables.

  Specifically,
  let $X_1,\ldots,X_n\in\real^d$ be iid with 
  $\norm{X_i}_{\ell^2}\le U$ for $i=1,\ldots,n$.  Then,
  for any $p\in[1,\infty]$, $\norm{X_i\TT{X_i}}_p\le U^2$,
  and
  $$
    \prob{
      \norm*{ \eemp - \strue }_p \ge
      \xv\norm*{ \eemp - \strue }_p +
      r
    } 
    \le \ee^{-2nr^2/U^2}.
  $$
  This follows immediately from Theorem~S4.8.

Hence, for any collection of real valued random vectors bounded 
in Euclidean norm, the bounded differences inequality can be
applied to the  empirical estimate for any of the $p$-Schatten norms.
The radius is $r_\alpha = U\sqrt{(1/2n)\log\alpha}$.
However, unlike in the previous setting, the bounds may not 
necessarily be dimension free.  
\begin{example}[Distributions on the Hypercube]
If the components 
$\abs{X_{i,j}}\le1$ such as for multivariate uniform or Rademacher 
random variables,
then $U={d}^{1/2}$.  Consequently, 
$r_\alpha = O(\sqrt{d/n})$ is not dimension free.
While this makes estimation with respect to operator norm distance
challenging, we can still use Theorem~1 to 
fix the false positive rate.
\end{example}

\subsection{Simulations on High Dimensional Binary Vectors}

Random binary vectors fall into the category of bounded 
random variables, which have sub-Gaussian concentration as
a consequence of the bounded differences inequality---an
extension of \holders inequality---as 
discussed in Section~\ref{sec:boundedrv}.  
The result is a slightly different form for the confidence
balls compared with the log concave setting.
And while the concentration behaviour in this setting relies
on the dimension and is poor for producing an estimator
that is close in operator or Hilbert-Schmidt norm, 
our support recovery methodology is still able to perform
well in this setting.

Table~\ref{tab:fptpRad} displays false positive and 
true positive percentages for seven sparse estimators 
computed over 100 replications
of a random sample of size $n=50$ of $d=50,100,200,500$
dimensional 
multivariate Rademacher 
data with a tri-diagonal covariance matrix $\Sigma$ whose
diagonal entries are 1 and whose off-diagonal entries are 0.3.
As a consequence of the bounded differences inequality, this 
case also exhibits sub-Gaussian behaviour.  As a consequence,
Table~\ref{tab:fptpRad} is similar to Table~1 from the main article.
Concentration estimators perform better as $d$ increases;
Threshold estimators are overly aggressive as $d$ increases;
And the PDS method's support recover is unaffected by the
change in $d$.

\begin{table}
  \begin{center}
  \begin{tabular}{l rrrr c rrrr}
  \hline
        & \multicolumn{4}{c}{False Positive \%} &  
        \multicolumn{4}{c}{True Positive \%}  \\
    Dimension & 50 & 100 & 200 & 500 && 50 & 100 & 200 & 500 \\
  \hline
    CoM 1\% &0.0&0.1&0.3&0.9&& 0.0& 7.2&17.5&30.7\\
    CoM 5\% &0.9&1.9&3.2&4.4&&28.9&41.1&49.0&54.5\\
    PDS     &3.4&3.4&3.4&3.4&&50.2&50.3&50.8&50.6\\
    Hard    &0.0&0.0&0.0&0.0&& 0.1& 0.0& 0.0& 0.0\\
    Soft    &2.4&0.8&0.2&0.0&&44.2&29.3&18.1& 9.5\\
    SCAD    &1.8&1.0&0.5&0.1&&40.3&33.3&24.1&13.1\\
    Adpt    &0.2&0.1&0.1&0.0&&16.5& 9.4& 5.5& 2.5\\
  \hline
  \end{tabular}
  \end{center}
  \capt{
    \label{tab:fptpRad}
    Percentage of false and true positives for multivariate Rademacher data
    and $\strue$ tri-diagonal with diagonal entries 1 and off-diagonal
    entries 0.3.
  }
\end{table}

\begin{figure}
  \begin{center}
    \includegraphics[width=0.6\textwidth]{\PICDIR/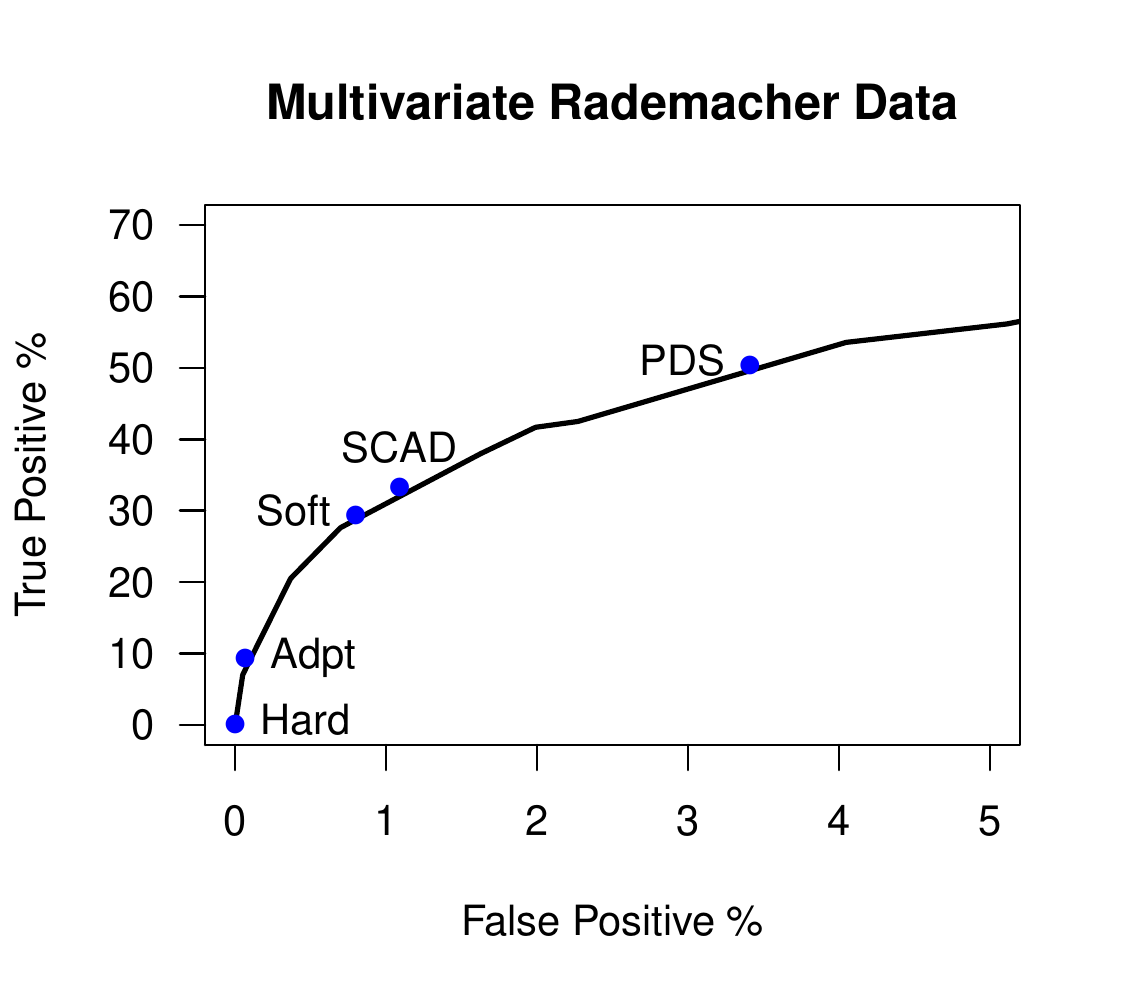}
  \end{center}
  \capt{
    \label{fig:fptpBoth}
    A line demarcating the trade-off between false and true positive
    recoveries for multivariate Gaussian (top left), Laplace (top right),
    and Rademacher (bottom)
    data from 100 replications of sample size $n=50$ and 
    dimension $d=100$.  The lines were formed by varying the false
    positive rate for the concentration estimator.  
    The blue dots indicate the performance of 
    past methods for sparse covariance
    estimation.
  }
\end{figure}

\section{Proofs}
\label{sec:proofs}

\begin{proof}[Proof of Lemma~1]
  We begin with the collection of $N = d(d-1)/2$ random variables
  $\hat{\sigma}_{i,j} = n^{-1}\sum_{k=1}^n X_{k,i}X_{k,j}$, which 
  we will denote $Z_1,\ldots,Z_N$.  Without loss of generality,
  assume that $Z_1,\ldots,Z_{N_0}$ have mean zero and 
  $Z_{N_0+1},\ldots,Z_{N_1+N_0}$ have nonzero mean and $N = N_0+N_1$.
  To achieve $\eta$ false positives, we would find the index $k_0$
  corresponding to the $\lfloor(1-\eta)N_0\rfloor$ order statistic
  of the $Z_1,\ldots,Z_{N_0}$, and set all entries 
  $\abs{Z_i}\le\abs{Z_{k_0}}$ to zero.  Instead, we find the
  index $\hat{k}$ corresponding to the 
  ${\lfloor (1-\eta) N \rfloor}$ order statistic of all the $Z_i$.

  Given that $\strue\in\mathcal{U}(\kappa,\delta)$, we have
  $$
    \abs{k_0-\hat{k}}\le \kappa d.
  $$
  Thus, when considering the achieved false positive rate 
  $\#\{\abs{Z_i} < \abs{Z_{\hat{k}}}|i\le N_0\}/N_0$ 
  to the target rate
  $\#\{\abs{Z_i} < \abs{Z_{k_0}}|i\le N_0\}/N_0$,
  we have
  $$
  \abs{\eta-\hat{\eta}}\le \frac{\kappa d}{N_0} =
  \frac{2\kappa}{d-1-2\kappa} = O(d^{\nu-1}).
  $$

\end{proof}

\begin{proof}[Proof of Lemma~2]
  This proof follows from the result of
  \cite{LATALA2005} Theorem 2---also found in Theorem 2.3.8 
  of \cite{TAO2012}---without the assumption of iid entries
  in the random matrix but with many entries equal to zero.

  We first apply the expectation with respect to $\Veps$ and
  use the result from \cite{LATALA2005}.
  \begin{align*}
    \xv\norm{ A\circ\Veps }_\infty 
    &= 
    \xv_A\xv_\Veps\norm{ A\circ\Veps }_\infty \\
    &\le 
    \left[\xv_A
      \xv_\Veps\norm{ A\circ\Veps }_\infty^2
    \right]^{1/2} \\
    &\le 
    \left[{
      K_1\xv\max_{i=1,\ldots,d}\left(\sum_{j=1}^da_{i,j}^2\right) +
      K_2\xv\left(\sum_{i,j=1}^{d}a_{i,j}^4\right)^{1/2}
    }\right]^{1/2} 
  \end{align*}
  with $K_1,K_2$ universal constants.
  For the second term in the above equation, we have
  via Jensen's inequality and the fact that $\abs{a_{i,j}}\le1$
  that
  \begin{align*}
    \xv\left(\sum_{i,j=1}^{d}a_{i,j}^4\right)^{1/2}
    \le
    \left(\sum_{i,j=1}^{d}\xv a_{i,j}^4\right)^{1/2}
    \le
    (d^2\rho)^{1/2} = d\rho^{1/2}.
  \end{align*}
  For the first term in the above equation, we make use
  of the fact that $\abs{a_{i,j}}\le1$ and that only $\rho$ are
  non-zero resulting in
  \begin{align*}
    \xv\max_{i=1,\ldots,d}\left(\sum_{j=1}^da_{i,j}^2\right)
    &\le 
    \xv\left(\sum_{i=1}^d(\sum_{j=1}^da_{i,j}^2)^2\right)^{1/2}
    \\ &\le
    \xv\left(
      \sum_{i,j}^da_{i,j}^2+\sum_{i,j\ne k=1}^da_{i,j}a_{i,k}
    \right)^{1/2}
    \\ &\le 
    \left(
      d^2\rho+(d^3-d^2)\rho^2
    \right)^{1/2}
    \\ &\le
      d\rho^{1/2}+d^{3/2}\rho.
  \end{align*}
  Combining the above results and updating the constants $K_1,K_2$
  as necessary gives the desired result
  $$
    \xv\norm{ A\circ\Veps }_\infty 
    \le
    \left[{
      K_1d\rho^{1/2} +
      K_2d^{3/2}\rho
    }\right]^{1/2} 
    \le
    K_1d^{1/2}\rho^{1/4} + K_2d^{3/4}\rho^{1/2}.
  $$
\end{proof}

\begin{proof}[Proof of Theorem~1]
  Without loss of generality, we can normalize $\eemp$ 
  such that the diagonal entries are 1.  Thus 
  $\eemp_0=I_d$, the $d$ dimensional identity matrix,
  and the off-diagonal entries of all matrices considered
  will be bounded in absolute value by one.

  For the empirical covariance estimator, 
  $\norm{\eemp-\eemp_0}_\infty=\norm{\eemp}_\infty-1$. 
  We can decompose $\eemp$ into three parts: the diagonal
  of ones; the off-diagonal terms corresponding to 
  $\sigma_{i,j}\ne0$; and the off-diagonal terms 
  corresponding to $\sigma_{i,j}=0$.  The number of 
  non-zero off-diagonal terms is bounded in each row/column 
  by $\kappa$.  Hence,
  $$
    \norm{\eemp-\eemp_0}_\infty 
    \le \norm{\eemp_{\ne0}}_\infty+\norm{\eemp_{=0}}_\infty
    \le \kappa+\norm{\eemp_{=0}}_\infty
  $$
  where $\eemp_{=0}$ has entries $\hat{\sigma}_{i,j}$
  such that $\xv\hat{\sigma}_{i,j}=0$.  

  Let the entrywise or Hadamard product of two similar
  matrices $A$ and $B$ be $A\circ B$ with entry $ij$th entry 
  $(a_{i,j}b_{i,j})_{i,j}$.
  For ease of notation, we denote $\Pi_0=\eemp_{=0}$.
  Let $\Pi_{1}$ be the result of randomly removing 
  half of the entries from $\Pi_0$, which is 
  $\Pi_{1} = \Pi_0\circ B$ where $B\in\{0,1\}^{d\times d}$ 
  is a symmetric random matrix with iid $\distBern{1/2}$ 
  entries.  Considering the corresponding symmetric Rademacher
  random matrix, $\Veps = 2B-1$, we then have that
  $$
    \xv\norm{\Pi_1}_\infty = 
    \xv\norm{\Pi_0\circ B}_\infty = 
    \frac{1}{2}\xv\norm{\Pi_0 \pm \Pi_0\circ \Veps}_\infty.
  $$
  where the $\pm$ comes from the symmetry of $\Veps$.
  Thus,
  $$
    \abs*{\xv\norm{\Pi_{1}}_\infty-\frac{1}{2}\xv\norm{\Pi_0}_\infty}
    \le
    \frac{1}{2}\xv\norm{\Pi_0\circ \Veps}_\infty.
  $$

  This idea can be iterated.  Let 
  $\Pi_{m} = \Pi_0\circ B_1\circ\ldots\circ B_m$ with
  the $B_i$ iid copies of $B$ from before.
  Then, similarly,
  \begin{align*}
    \xv\norm{\Pi_m}_\infty &\le
    \frac{1}{2}\xv\norm{\Pi_{m-1}}_\infty + 
    \frac{1}{2}\norm{\Pi_{m-1}\circ \Veps_m}_\infty \\
    \xv\norm{\Pi_m}_\infty &\ge
    \frac{1}{2}\xv\norm{\Pi_{m-1}}_\infty -
    \frac{1}{2}\norm{\Pi_{m-1}\circ \Veps_m}_\infty.
  \end{align*}
  Moreover,
  $$
    \abs*{\xv\norm{\Pi_m}_\infty-2^{-m}\xv\norm{\Pi_0}_\infty}
    \le
    \sum_{j=0}^{m-1} 2^{-m+j}\xv\norm{\Pi_j\circ \Veps}_\infty.
  $$
  Applying Lemma~3.5
  $m$ times and updating
  universal constants $K_1,K_2$ as necessary results in 
  \begin{align*}
    \abs*{\xv\norm{\Pi_m}_\infty-2^{-m}\xv\norm{\Pi_0}_\infty}
    &\le
    \sum_{j=0}^{m-1} 2^{-m+j}\left[
      K_1d^{1/2}2^{-j/4} + K_2d^{3/4}2^{-j/2}
    \right]\\
    &\le
      K_1d^{1/2}2^{-m/4} + K_2d^{3/4}2^{-m/2}
  \end{align*}
  Thus, for $\rho=2^{-m}$, we have 
  $$
    \abs*{\xv\norm{\Pi_m}_\infty-\rho\xv\norm{\Pi_0}_\infty}
    \le
    K_1d^{1/2}\rho^{1/4} + K_2d^{3/4}\rho^{1/2}.
  $$

  We want to replace the $\Pi_m$ with $\eemp_\rho-\eemp_0$
  and similarly for $\Pi_0$.  The off-diagonal entries such that
  $\sigma_{i,j}\ne0$ can contribute at most $\kappa=o(d^{\nu})$, $\nu<1$,
  to the operator norm.  Hence,
  \begin{multline*}
    \abs*{
      \xv\norm{\eemp_\rho-\eemp_0}_\infty-
      \rho\xv\norm{\eemp-\eemp_0}_\infty
    }
    \le
    K_1d^{1/2}\rho^{1/4} + K_2d^{3/4}\rho^{1/2} +
    (1+\rho)o(d^{\nu}).
  \end{multline*}
  We lastly apply the crude---but effective in the non-asymptotic 
  setting---bound $\norm{\eemp}_\infty\ge d/n$ almost surely.  Dividing
  by $\xv\norm{\eemp-\eemp_0}_\infty$ results in 
  $$
    \abs*{
      \frac{
        \xv\norm{\eemp_\rho-\eemp_0}_\infty
      }{
        \xv\norm{\eemp-\eemp_0}_\infty
      } - \rho}
    \le
    K_1nd^{-1/2}\rho^{1/4} + K_2nd^{-1/4}\rho^{1/2} +
    o(nd^{\nu-1}).
  $$
  Thus, we require $\nu<1/2$ to make the final term negligible 
  for large $d$ with respect to the others. 

  We can extend this result to arbitrary $\rho\in(0,0.5]$ by 
  using the simple observation that given such a $\rho$,
  there exists an $a\in\integer^+$ such that $2^a\rho\in[0.5,1)$.
  Therefore, setting $\eta=2^a\rho$ and replacing 
  $\eemp$ with the corresponding matrix $\eemp_\eta$ 
  from Lemma~3.1
  allows us to proceed as above.
\end{proof}

\begin{proof}[Proof of Theorem~2]
  From the derivation in Section~2
  we have that
  $$
    \prob{ 
      \norm*{\espar-\strue}_p^{1/2} \ge 
      \xv \norm*{\eemp-\strue}_p^{1/2} + 2r_\alpha
    } \le \alpha
  $$
  for any $\espar$ such that $\norm{\espar-\eemp}\le r_\alpha$.
  Writing $Z=\norm*{\eemp-\strue}_p$ and $Y=\norm*{\espar-\strue}_p$ 
  and squaring and rearranging the terms gives,
  \begin{align*}
    \text{P}( 
       Y \ge 
       \xv Z &+ 
      4r_\alpha (\xv Z)^{1/2} + 4r_\alpha^2
    )  \\
    &=\prob{ 
      Y \ge 
      \xv Z\left(  
        1 + 4r_\alpha (\xv Z)^{-1/2} + 
        4r_\alpha^2 (\xv Z)^{-1}
      \right)
    } \\
    &=\prob{ 
      Y \ge 
      \xv Z\left(  
        1 + 2r_\alpha (\xv Z)^{-1/2}  
      \right)^2
    } 
    \le \alpha
  \end{align*}
  Given the standard convergence result for the empirical
  covariance matrix that $\xv\norm{\eemp-\strue}_p = O(n^{-1/2})$
  and our definition of $r_\alpha = O(n^{-1/2}\sqrt{-\log\alpha})$, we 
  now have that
  $$
    \prob{  
      \norm*{\espar-\strue}_p \ge
      O\left(n^{-1/2}( 1 + n^{-1/4}\sqrt{-\log\alpha} )^2\right)
    } \le \alpha,
  $$
  which holds for any $\espar$ 
  such that $\norm{\espar-\eemp}\le r_\alpha$.
\end{proof}
\begin{proof}[Proof of Theorem~3]
  Let $\estar = \{\hat{\sigma}_{i,j}\Indc{\sigma_{i,j}\ne0}\}$ 
  be the result of a perfect thresholding of the
  empirical covariance estimator.  That is,
  $\estar$ has support identical to the true $\strue$ and non-zero
  entries that coincide with $\eemp$.  Furthermore, let 
  $\etild$ be some other overly-sparse covariance estimator 
  resulting from zeroing entries in $\eemp$, but with more zeros than
  $\strue$.  For a radius $r_\alpha$, $\espar$ is the sparsest
  element in the corresponding confidence ball.
  \begin{align} 
    &\prob{ \mathrm{supp}(\espar) \ne \mathrm{supp}(\strue) }
    \nonumber\\
    \nonumber
    &~~~~= \prob{ 
      \norm{\estar-\eemp}_\infty^{1/2}\ge r_\alpha \text{ or }
      \norm{\etild-\eemp}_\infty^{1/2}\le r_\alpha 
    }\\
    \label{eqn:splitSupport}
    &~~~~= \prob{ 
      \norm{\estar-\eemp}_\infty^{1/2}\ge r_\alpha
    }+\prob{
      \norm{\etild-\eemp}_\infty^{1/2}\le r_\alpha 
    },
  \end{align}
  which, assuming a large enough sample size $n$, 
  are the two mutually exclusive events that the estimator 
  with correct support $\estar$ is not in the ball of radius $r_\alpha$
  and that a sparser estimator $\etild$ is in the ball.

  For the first term in Equation~\ref{eqn:splitSupport}, we 
  show that the probability that a matrix with the correct support
  lying outside of the confidence set will tend to zero.
  \begin{align*}
    &\prob{ \norm*{\estar-\eemp}_\infty^{1/2}\ge r_\alpha }
    \\&~~~~\le \prob{
      \norm{\estar-\strue}_\infty^{1/2} +
      \norm{\eemp-\strue}_\infty^{1/2} \ge r_\alpha
    }\\
    &~~~~\le \prob{
      \norm{\estar-\strue}_\infty^{1/2} \ge r_\alpha/2
    } + \prob{ 
      \norm{\eemp-\strue}_\infty^{1/2} \ge r_\alpha/2
    } = (\mathrm{I}) + (\mathrm{II})
  \end{align*}

  For $(\mathrm{II})$, we have that 
  $\xv\norm{\eemp-\strue} = O(n^{-1/2})$
  and that $r_\alpha^2=O(n^{-1}\log\alpha)$.
  Let $Z = \norm{\eemp-\strue}_\infty^{1/2}$
  for simplicity of notation.  Then, using the concentration
  result already established for Lipschitz functions of log 
  concave measures,
  \begin{align*}
    (\mathrm{II}) 
    &= \prob{ Z \ge r_\alpha/2 } \\
    &= \prob{ Z \ge \xv Z + (r_\alpha/2 - \xv Z) } \\
    &\le \exp\left(
      -n( r_\alpha/2 - \xv Z )^2/2c_0
    \right) \le C\alpha^{1/4}
  \end{align*}
  for some positive $C = o(1)$.  

  For $(\mathrm{I})$,
  applying the Gershgorin circle theorem \citep{ISERLES2009} 
  to the operator norm
  gives
  \begin{align*}
    (\mathrm{I})
    &\le \prob{
      \left(
        \max_{i=1,\ldots,d} {\textstyle\sum_{j=1}^d}
        \abs{ \hat{\sigma}_{i,j} - \sigma_{i,j} }\Indc{ \sigma_{i,j}\ne0 }
      \right)^{1/2}
      \ge r_\alpha/2
    } \\
    &\le \prob{
      \max_{i,j=1,\ldots,d} \abs{\hat{\sigma}_{i,j}-\sigma_{i,j}}^{1/2}
      \abs{\mathrm{supp}_\text{col}(\strue)}^{1/2} \ge r_\alpha/2
    }
  \end{align*}
  where $\mathrm{supp_\text{col}}( \Sigma ) = 
  \max_{j=1,\ldots,d} \abs{\{(i,j):\sigma_{i,j}\ne0\}}$ is the 
  maximal number of non-zero entries in any given column. 
  From Proposition~\ref{prop:lipDp2}, we have that 
  $\norm{\eemp-\strue}_2^{1/2}$ is Lipschitz with constant $n^{1/2}$.
  As the squared Frobenius norm is equal to the sum of the squares
  of the entries of the matrix, we in turn have that the entries
  $\abs{\hat{\sigma}_{i,j}-\sigma_{i,j}}^{1/2}$ are also Lipschitz
  with constant $n^{1/2}$.  As the maximum of $d^2$ Lipschitz functions
  is also still Lipschitz, we get similarly to case $(\mathrm{II})$ that
  $(\mathrm{I})\le C\alpha^\veps$ for some $\veps>0$.
  
  For the second term in Equation~\ref{eqn:splitSupport}, we
  show that the probability of any sparser matrix than $\estar$ 
  existing in the confidence ball goes to zero.
  Let $\mathrm{supp}(\etild)\subset\mathrm{supp}(\strue)$.  Then, 
  there exists a pair of indices
  $(i_0,j_0)\in\mathrm{supp}(\strue)$ such that 
  $(i_0,j_0)\notin\mathrm{supp}(\etild)$.
  \begin{align*} 
    \prob{
      \norm{\etild-\eemp}_\infty^{1/2}\le r_\alpha 
    } 
    &\le \prob{
      \max_{i=1,\ldots,d}\sum_{j=1}^d\hat{\sigma}_{i,j}^2
      \Indc{\tilde{\sigma}_{i,j}=0}
      \le r_\alpha^4
    }\\
    &\le \prob{
      \max_{i=1,\ldots,d}\sum_{j=1}^d\hat{\sigma}_{i,j}^2
      \Indc{{\sigma}_{i,j}=0}
      + \hat{\sigma}_{i_0,j_0}^2
      \le r_\alpha^4
    }\\
    &\le \prob{
      \hat{\sigma}_{i_0,j_0}
      \le r_\alpha^2
    }
  \end{align*}
  We have that if $\sigma_{i,j}\ne0$ then $\abs{\sigma_{i,j}}>\delta>0$.
  Hence, 
  $
    \hat{\sigma}_{i_0,j_0} = 
    (\hat{\sigma}_{i_0,j_0}-{\sigma}_{i_0,j_0})+{\sigma}_{i_0,j_0} \ge
    o_p(n^{-1/2}) + \delta.
  $
  Meanwhile, $r_\alpha = O(n^{-1/2})$. Thus, 
  $
    \prob{
      \hat{\sigma}_{i_0,j_0}
      \le r_\alpha^2
    }\rightarrow 0
  $ as $n\rightarrow\infty$
  as long as $\delta = o(n^{-1})$.
\end{proof}

\section{Estimation with the Empirical Diagonal}
\label{app:empDiag}

In this section, we demonstrate that the 
distance in operator norm is an insufficient metric to 
use for the comparison of estimators for large sparse 
covariance matrices in the non-asymptotic setting.  
The operator norm's usage in past
research 
\citep{BICKELLEVINA2008,BICKELLEVINA2008A,KAROUI2008,
ROTHMAN2009} 
stems from the result that 
``convergence in operator norm implies convergence of the eigenvalues
and eigenvectors.'' 
However, this does not imply strong performance for finite samples.
We demonstrate this by showing that the naive empirical diagonal 
covariance matrix---that is,
  the estimator $\ediag$ with $\ediag_{i,j}=\eemp_{i,j}$ if $i=j$
  and $\ediag=0$
otherwise---performs better in operator norm for finite samples.

The simulation study from \cite{ROTHMAN2009} was reproduced
where four threshold estimators---hard, soft, SCAD, and 
adaptive LASSO---were applied to estimating the covariance 
matrix for a sample of $n=100$ random normal vectors
in dimensions $d=30,100,200,500$ for three different models.
We consider models 1 and 2, which respectively are
autoregressive covariance matrices with entries
$\sigma_{i,j} = \rho^{\abs{i-j}}$ and moving average
covariance matrices with entries
$\sigma_{i,j} = \rho\indc{\abs{i-j}=1} + \indc{i=j}$.
In both cases, we set $\rho=0.3$.  
The simulations were replicated 100 times and averaged.
The results are displayed
in Table~\ref{tab:rothmanNorm} for multivariate Gaussian
data and in Table~\ref{tab:rothmanExp} for multivariate Laplace
data.

For multivariate Gaussian data, we see that SCAD thresholding
gives superior performance in operator norm distance 
until $d=200$ where it gives
comparable performance to the empirical diagonal matrix.
At $d=500$, the empirical diagonal now gives the best performance.
In the case of multivariate Laplace data, the empirical 
diagonal outperforms all of the thresholding methods
in all of the dimensions considered 
with respect to operator norm distance.
It is worth noting that theoretical results for these 
threshold estimators were only demonstrated for sub-Gaussian
data.

We understand that the performance of the threshold estimators 
improves asymptotically with increasing $n$ 
whereas the empirical diagonal will 
perform worse in the limit.  The main point to make is that for fixed
finite samples, as generally occur in practise, it is unwise to
claim an estimator's superiority based solely on the operator norm
distance.  Hence, we argue instead for support recovery of the 
true covariance matrix as the critical problem to solve in the
context of high dimensional sparse covariance estimation.

\begin{table}[t]
  \begin{center}
  \begin{tabular}{rrrrrrr}
    \hline
       & \multicolumn{6}{c}{\bf MA Matrix} \\
   $d$ & \multicolumn{1}{c}{Empirical} & \multicolumn{1}{c}{Diagonal} & \multicolumn{1}{c}{Hard} & \multicolumn{1}{c}{Soft} & \multicolumn{1}{c}{SCAD} & \multicolumn{1}{c}{LASSO} \\
    \hline
    30 & 1.32 (0.14)& 0.72 (0.04)& 0.72 (0.09)& 0.70 (0.06)& {\bf0.63} (0.07)& 0.64 (0.07) \\
   100 & 3.03 (0.19)& 0.77 (0.04)& 0.87 (0.10)& 0.85 (0.04)& {\bf0.73} (0.06)& 0.77 (0.06) \\
   200 & 4.92 (0.21)& {\bf0.79} (0.04)& 0.95 (0.11)& 0.91 (0.03)& {\bf0.79} (0.06)& 0.82 (0.05) \\
   500 & 9.73 (0.25)& {\bf0.83} (0.05)& 1.06 (0.11)& 0.98 (0.02)& 0.88 (0.06)& 0.88 (0.05) \\
    \hline\hline
       & \multicolumn{6}{c}{\bf AR Matrix} \\
   $d$ & \multicolumn{1}{c}{Empirical} & \multicolumn{1}{c}{Diagonal} & \multicolumn{1}{c}{Hard} & \multicolumn{1}{c}{Soft} & \multicolumn{1}{c}{SCAD} & \multicolumn{1}{c}{LASSO} \\
    \hline
    30 & 1.33 (0.16)& 0.90 (0.04)& 0.79 (0.10)& 0.83 (0.07)& {\bf0.74} (0.09)& 0.76 (0.09) \\
   100 & 3.06 (0.21)& 0.94 (0.03)& 0.95 (0.08)& 1.02 (0.04)& {\bf0.86} (0.05)& 0.92 (0.05) \\
   200 & 4.99 (0.21)& 0.95 (0.02)& 1.00 (0.09)& 1.09 (0.03)& {\bf0.92} (0.04)& 0.97 (0.03) \\
   500 & 9.80 (0.26)& {\bf0.97} (0.02)& 1.04 (0.08)& 1.16 (0.02)& 0.99 (0.04)& 1.04 (0.03) \\
    \hline
  \end{tabular}
  \end{center}
  \capt{
    \label{tab:rothmanNorm}
    Distances from six different covariance estimators to truth 
    in operator norm for sample size $n=100$, dimensions
    $d=30,100,200,500$, and observations drawn from a multivariate
    Gaussian distribution.  Standard deviations computed over the 100
    replications are in brackets.
  }
\end{table}

\begin{table}[t]
  \begin{center}
  \begin{tabular}{rrrrrrr}
    \hline
       & \multicolumn{6}{c}{\bf MA Matrix} \\
   $d$ & \multicolumn{1}{c}{Empirical} & \multicolumn{1}{c}{Diagonal} & \multicolumn{1}{c}{Hard} & \multicolumn{1}{c}{Soft} & \multicolumn{1}{c}{SCAD} & \multicolumn{1}{c}{LASSO} \\
    \hline
    30 & 2.23 (0.55)& {\bf0.86} (0.12)& 0.94 (0.23)& 0.93 (0.11)& 0.91 (0.22)& 0.90 (0.16) \\
   100 & 6.18 (1.50)& {\bf0.93} (0.13)& 1.17 (0.33)& 1.17 (0.25)& 1.31 (0.35)& 1.18 (0.29) \\
   200 & 11.41 (2.67)& {\bf0.98} (0.13)& 1.41 (0.39)& 1.33 (0.32)& 1.56 (0.43)& 1.36 (0.36)\\
   500 & 26.30 (5.68)& {\bf1.07} (0.17)& 2.01 (0.81)& 1.82 (0.61)& 2.24 (0.73)& 1.89 (0.64) \\
    \hline
    \hline
       & \multicolumn{6}{c}{\bf AR Matrix} \\
   $d$ & \multicolumn{1}{c}{Empirical} & \multicolumn{1}{c}{Diagonal} & \multicolumn{1}{c}{Hard} & \multicolumn{1}{c}{Soft} & \multicolumn{1}{c}{SCAD} & \multicolumn{1}{c}{LASSO} \\
    \hline
    30 & 2.34 (0.51)& {\bf0.96} (0.09)& 1.03 (0.20)& 1.05 (0.09)& 1.00 (0.20)& 1.02 (0.15) \\
   100 & 6.22 (1.48)& {\bf1.05} (0.09)& 1.27 (0.26)& 1.28 (0.17)& 1.34 (0.30)& 1.25 (0.19) \\
   200 & 11.44 (2.38)& {\bf1.05} (0.12)& 1.44 (0.40)& 1.42 (0.25)& 1.59 (0.38)& 1.40 (0.31) \\
   500 & 26.69 (5.10)& {\bf1.09} (0.10)& 1.95 (0.77)& 1.82 (0.65)& 2.19 (0.77)& 1.82 (0.65) \\
    \hline
  \end{tabular}
  \end{center}
  \capt{
    \label{tab:rothmanExp}
    Distances from six different covariance estimators to truth 
    in operator norm for sample size $n=100$, dimensions
    $d=30,100,200,500$, and observations drawn from a multivariate
    Laplace distribution.  Standard deviations computed over the 100
    replications are in brackets.
  }
\end{table}

\section{Derivations of Lipschitz constants}
\label{app:calcs}

The following lemmas and propositions establish that
specific functions used in the construction of confidence
sets are, in fact, Lipschitz functions.

\begin{lm}
  \label{lem:posdefTrace}
  Let $A$ and $B$ be two $d\times d$ real valued
  symmetric non-negative definite matrices.  Then,
  $$
    \norm{ A+B }_1 = \norm{A}_1 + \norm{B}_1
  $$
  where $\norm{\cdot}_1$ is the trace class norm.
\end{lm}
\begin{proof}
  By definition, $\norm{A}_1 = \tr{(A^*A)^{1/2}}$.  
  If $A$ is symmetric and non-negative definite, then
  $(A^*A)^{1/2}=A$.  Hence, if $A$ and $B$ are symmetric
  and positive definite, then so is $A+B$.  Therefore,
  $$
    \norm{A+B}_1 = \tr{A+B} = \tr{A} + \tr{B} = \norm{A}_1 + \norm{B}_1.
  $$
\end{proof}

\begin{prop}[Lipschitz for $p=1$]
  \label{prop:lipDp1}
  Assume that $X_1,\ldots,X_n\in\real^d$ and that $\xv X_i =0$ for
  $i=1,\ldots,n$.
  The function $\phi:\real^{d\times n}\rightarrow\real$ defined as
  $$
    \phi(X_1,\ldots,X_n) = \norm*{
      \frac{1}{n} \sum_{i=1}^n X\TT{X}
    }_1^{1/2}
  $$
  is Lipschitz
  with constant $n^{-1/2}$ with respect to the metric
  $
    d_{(2,2)}({\bf X},{\bf Y}) = 
    \left(\sum_{i=1}^n \norm*{ X_i - Y_i }_{\ell^2}^2\right)^{1/2}.
  $
\end{prop}
\begin{proof}
  Let $X_1,\ldots,X_n,Y_1,\ldots,Y_n\in\real^d$ with 
  $\xv X_i = \xv Y_i=0$ for all $i$ and denote 
  ${\bf X}=(X_1,\ldots,X_n)$ and ${\bf Y}=(Y_1,\ldots,Y_n)$.
  Making use of Lemma~\ref{lem:posdefTrace}, we have
  \begin{align*}
    &n(\phi({\bf X})-\phi({\bf Y}))^2 \\
    &~~~~= \norm*{ \sum_{i=1}^n X_i\TT{X_i} }_1 +
       \norm*{ \sum_{i=1}^n Y_i\TT{Y_i} }_1 -
      2\norm*{ \sum_{i=1}^n X_i\TT{X_i} }_1^{1/2} 
       \norm*{ \sum_{i=1}^n Y_i\TT{Y_i} }_1^{1/2} \\
    &~~~~= \sum_{i=1}^n\left(
         \norm*{ X_i }_{\ell^2}^2 + \norm*{ Y_i }_{\ell^2}^2
      \right) -
      2\left[ 
        \left(\sum_{i=1}^n \norm*{ X_i }_{\ell^2}^2\right)
        \left(\sum_{i=1}^n \norm*{ Y_i }_{\ell^2}^2\right)
      \right]^{1/2}\\
    &~~~~= \sum_{i=1}^n\left(
         \norm*{ X_i }_{\ell^2}^2 + \norm*{ Y_i }_{\ell^2}^2
      \right) -
      2\left[ 
        \sum_{i,j=1}^n \norm{X_i}_{\ell^2}^2\norm{Y_j}_{\ell^2}^2
      \right]^{1/2} \\
    &~~~~= \sum_{i=1}^n\left(
         \norm*{ X_i }_{\ell^2}^2 + \norm*{ Y_i }_{\ell^2}^2
      \right) \\
      &~~~~~~~~~~~~-
      2\left[ 
        \sum_{i< j} \left(
          \norm{X_i}_{\ell^2}^2\norm{Y_j}_{\ell^2}^2+
          \norm{X_j}_{\ell^2}^2\norm{Y_i}_{\ell^2}^2
        \right) +
        \sum_{i=1}^n\norm{X_i}_{\ell^2}^2\norm{Y_i}_{\ell^2}^2
      \right]^{1/2} \\
    &~~~~\le \sum_{i=1}^n\left(
         \norm*{ X_i }_{\ell^2}^2 + \norm*{ Y_i }_{\ell^2}^2
      \right) \\
      &~~~~~~~~~~~~-
      2\left[ 
        2\sum_{i< j} \left(
          \norm{X_i}_{\ell^2}\norm{Y_j}_{\ell^2}
          \norm{X_j}_{\ell^2}\norm{Y_i}_{\ell^2}
        \right) +
        \sum_{i=1}^n\norm{X_i}_{\ell^2}^2\norm{Y_i}_{\ell^2}^2
      \right]^{1/2} \\
    &~~~~\le \sum_{i=1}^n\left(
         \norm*{ X_i }_{\ell^2}^2 + \norm*{ Y_i }_{\ell^2}^2
      \right) -
      2 \sum_{i=1}^n \norm{X_i}_{\ell^2}\norm{Y_i}_{\ell^2}\\
    &~~~~\le \sum_{i=1}^n \left(
        \norm{X_i}_{\ell^2}-\norm{Y_i}_{\ell^2}
      \right)^2\\
    &~~~~\le \sum_{i=1}^n 
        \norm{X_i-Y_i}_{\ell^2}^2
  \end{align*}
\end{proof}

The next two lemmas are used to prove the Lipschitz constant
for the $p$-Schatten norms with $p=2$ and $p=\infty$,
respectively.  The first lemma is reminiscent of 
the Cauchy-Schwarz inequality
in the setting of the $2$-Schatten norm.
\begin{lm}
  \label{lem:schatten2}
  Let $X_1,\ldots,X_n,Y_1,\ldots,Y_n\in\real^d$.  Then,
  for the Frobenius norm, 
  $$
    \norm*{\sum_{i=1}^nX_i\TT{Y_i}}_2 \le
    \norm*{\sum_{i=1}^nX_i\TT{X_i}}_2^{1/2}
    \norm*{\sum_{i=1}^nY_i\TT{Y_i}}_2^{1/2}.
  $$
\end{lm}
\begin{proof}
  For any matrix $M\in\real^{d\times d}$, 
  we have that $\norm{M}_2^2 = \tr{M\TT{M}}$.  Hence, starting from
  the left hand side of the desired inequality and applying the 
  Cauchy-Schwarz inequality gives us
  \begin{align*}
    \norm*{\sum_{i=1}^nX_i\TT{Y_i}}_2 
    &=  \tr{
      \sum_{i,j=1}^n X_i\TT{Y_i}Y_j\TT{X_j}
    }^{1/2} \\
    &=   \left(
      \sum_{i,j=1}^n \iprod{X_i}{X_j}\iprod{Y_i}{Y_j}
    \right)^{1/2}\\
    &\le \left(
      \left(\sum_{i,j=1}^n \iprod{X_i}{X_j}^2\right)^{1/2}
      \left(\sum_{i,j=1}^n \iprod{Y_i}{Y_j}^2\right)^{1/2}
    \right)^{1/2}\\
    &\le \left(
      \tr{\sum_{i,j=1}^n X_i\TT{X_i}X_j\TT{X_j}}^{1/2}
      \tr{\sum_{i,j=1}^n Y_i\TT{Y_i}Y_j\TT{Y_j}}^{1/2}
    \right)^{1/2}\\
    &\le \norm*{\sum_{i=1}^n X_i\TT{X_i}}_2^{1/2}
         \norm*{\sum_{i=1}^n Y_i\TT{Y_i}}_2^{1/2}\\
  \end{align*}
\end{proof}

\begin{lm}
  \label{lem:schattenI}
  Let $X_1,\ldots,X_n,Y_1,\ldots,Y_n\in\real^d$.  Then,
  for the operator norm, 
  $$
    \norm*{\sum_{i=1}^nX_i\TT{Y_i}}_\infty \le
    \norm*{\sum_{i=1}^nX_i\TT{X_i}}_\infty^{1/2}
    \norm*{\sum_{i=1}^nY_i\TT{Y_i}}_\infty^{1/2}.
  $$
\end{lm}
\begin{proof}
  Using the definition of the operator norm and the 
  Cauchy-Schwarz inequality, we have that
  \begin{align*}
    \norm*{\sum_{i=1}^nX_i\TT{Y_i}}_\infty 
    &= \sup_{v\in\real^d,~\norm{v}_{\ell^2}=1} 
       \sum_{i=1}^n \iprod{X_i}{v}\iprod{Y_i}{v} \\
    &\le \left(
      \sup_{v\in\real^d,~\norm{v}_{\ell^2}=1} 
      \sum_{i=1}^n\iprod{X_i}{v}^2
      \sup_{u\in\real^d,~\norm{u}_{\ell^2}=1}  
      \sum_{i=1}^n\iprod{Y_i}{u}^2
    \right)^{1/2}\\
    &=
    \norm*{\sum_{i=1}^nX_i\TT{X_i}}_\infty^{1/2}
    \norm*{\sum_{i=1}^nY_i\TT{Y_i}}_\infty^{1/2}.
  \end{align*}
\end{proof}

\begin{prop}[Lipschitz for $p=2$ or $p=\infty$]
  \label{prop:lipDp2}
  Assume that $X_1,\ldots,X_n\in\real^d$ and that $\xv X_i =0$ for
  $i=1,\ldots,n$.  Let $p\in[2,\infty]$.
  The function $\phi:\real^{d\times n}\rightarrow\real$ defined as
  $$
    \phi(X_1,\ldots,X_n) = \norm*{
      \frac{1}{n} \sum_{i=1}^n X_i\TT{X_i}
    }_p^{1/2}
  $$
  is Lipschitz
  with constant $n^{-1/2}$ with respect to the metric
  $
    d_{(2,2)}({\bf X},{\bf Y}) = 
    \left(\sum_{i=1}^n \norm*{ X_i - Y_i }_{\ell^2}^2\right)^{1/2}.
  $
\end{prop}
\begin{proof}
  To establish that $\phi$ is Lipschitz with the desired constant,
  we proceed by bounding the G{\^a}teaux derivative.  
  Let $p\in\{2,\infty\}$.
  For $h\in\real$ and any $X_1\ldots,X_n,Y_1,\ldots,Y_n\in\real^d$
  such that $\norm{\sum_{i=1}^n X_i\TT{X_i}}_p\ne0$ and
  $\norm{\sum_{i=1}^nY_i\TT{Y_i}}_p\ne0$,
  \begin{align*}
    &\sqrt{n} d\phi( X_1,\ldots,X_n ; Y_1,\ldots,Y_n )=\\
    &~~~~= \lim_{h\rightarrow0}\left(
         \frac{
           \norm*{\sum_{i=1}^n(X_i+hY_i)\TT{(X_i+hY_i)}}_p
           - \norm*{\sum_{i=1}^nX_i\TT{X_i}}_p 
         }{
           2\norm*{\sum_{i=1}^nX_i\TT{X_i}}_p^{1/2}
           \left(\sum_{i=1}^n\norm*{hY_i}_{\ell^2}^2\right)^{1/2}
         }
       \right)\\
    &~~~~\le \lim_{h\rightarrow0}\left(
         \frac{
           \norm*{\sum_{i=1}^n\left(
             hY_i\TT{X_i}+hX_i\TT{Y_i}+h^2Y_i\TT{Y_i}
           \right)}_p
         }{
           2\norm*{\sum_{i=1}^nX_i\TT{X_i}}_p^{1/2}
           \left(\sum_{i=1}^n\norm*{hY_i}_{\ell^2}^2\right)^{1/2}
         }
       \right)\\
    &~~~~\le \frac{
           \norm*{\sum_{i=1}^n\left(
             Y_i\TT{X_i}+X_i\TT{Y_i}
           \right)}_p
         }{
           2\norm*{\sum_{i=1}^nX_i\TT{X_i}}_p^{1/2}
           \left(\sum_{i=1}^n\norm*{Y_i}_{\ell^2}^2\right)^{1/2}
         }\\
    &~~~~\le \frac{
           \norm*{\sum_{i=1}^n X_i\TT{Y_i} }_p
         }{
           \norm*{\sum_{i=1}^nX_i\TT{X_i}}_p^{1/2}
           \norm*{\sum_{i=1}^nY_i\TT{Y_i}}_p^{1/2}
         }\\
  \end{align*}
  where we used the facts that, for $M\in\real^{d\times d}$, 
  $\norm{M}_p = \norm{\TT{M}}_p$, that 
  $$
    \sum_{i=1}^n\norm{Y_i}^2_{\ell^2} =
    \sum_{i=1}^n\norm{Y_i\TT{Y}_i}_p  \ge 
    \norm{\sum_{i=1}^n Y_i\TT{Y}_i}_p,
  $$
  and that
  $$
    \norm*{\sum_{i=1}^n\left(
      Y_i\TT{X_i}+X_i\TT{Y_i}
    \right)}_p 
    \le 2\norm*{\sum_{i=1}^n X_i\TT{Y_i} }_p.
  $$
  Applying Lemma~\ref{lem:schatten2} in the $p=2$ case and
  Lemma~\ref{lem:schattenI} in the $p=\infty$ case shows that 
  $\sqrt{n}d\phi(\cdot) \le 1$ for all $X_i$ with 
  $\norm*{\sum_{i=1}^nX_i\TT{X_i}}_2\ne0$.   With application of the
  Mean Value Theorem, we have the desired Lipschitz constant.

  In the case that $\norm*{\sum_{i=1}^nX_i\TT{X_i}}_p=0$, we also
  achieve the same Lipschitz constant.  Indeed, as $X_i\TT{X_i}$
  is positive semi-definite, the norm can only be zero if all 
  $X_i = \TT{(0,\ldots,0)}$.  Hence, for any $Y_1,\ldots,Y_n\in\real^d$,
  \begin{multline*}
    \sqrt{n}\abs{\phi(X_1,\ldots,X_n)-\phi(Y_1,\ldots,Y_n)}
    =\\= \norm*{\sum_{i=1}^n Y_i\TT{Y_i}}_p^{1/2}
    \le \left(\sum_{i=1}^n \norm{Y_i}_{\ell^2}^2\right)^{1/2}
    = \left(\sum_{i=1}^n \norm{X_i-Y_i}_{\ell^2}^2\right)^{1/2}.
  \end{multline*}
\end{proof}

It is conjectured that the function $\phi(\cdot)$ is 1-Lipschitz
for all $p\in[1,\infty]$, which follows immediately if 
Lemmas~\ref{lem:schatten2} and~\ref{lem:schattenI} can be 
expanded to similar results for all $p\in[1,\infty]$.

\section{Concentration Results}

The following is a brief expository section detailing results 
used and the associated references for the various concentration
of measure tools used throughout this work.  More details on these
topics can be found in \cite{LEDOUX2001,BOUCHERON2013,GINENICKL2015}.

\subsection{Concentration results for log concave measures}
\label{app:logConcave}

Gaussian concentration for log concave measures is established
via the following theorems.  In short, Theorem~\ref{thm:logconcave}
states that log concave measures satisfy a 
logarithmic Sobolev inequality, which bounds the entropy
of the measure;
see Definition~\ref{def:entropy}.  
Logarithmic Sobolev inequalities were first introduced in
\cite{GROSS1975}, and this result is due to \cite{BAKRYEMERY1984}. 
Following that, Theorem~\ref{thm:gaussconc} 
links the logarithmic Sobolev inequality
with Gaussian concentration.  
Finally, Corollary~\ref{thm:prodmeasure} 
extends this Gaussian concentration to product measures 
whose individual components satisfy logarithmic Sobolev inequalities
in a dimension-free way due to the subadditivity of the entropy.  

\begin{defn}[Entropy]
  \label{def:entropy}
  For a probability measure $\mu$ on a measurable space 
  $(\Omega,\mathcal{F})$ and for any non-negative measurable
  function $f$ on $(\Omega,\mathcal{F})$, the entropy is
  $$
    \ent{\mu}{f} = \int f\log fd\mu -
    \left(\int f d\mu\right) \log\left(\int f d\mu\right).
  $$
\end{defn}

\begin{thm}[\cite{LEDOUX2001}, Theorem 5.2]
  \label{thm:logconcave}
  Let $\mu$ be strongly log-concave on $\real^d$ for some $c>0$.
  Then, $\mu$ satisfies the logarithmic Sobolev inequality.
  That is, for all smooth $f:\real^d\rightarrow\real$,
  $$
    \ent{\mu}{f^2} \le \frac{2}{c}\int \abs{\nabla f}^2 d\mu.
  $$
\end{thm}

\begin{thm}[\cite{LEDOUX2001}, Theorem 5.3]
  \label{thm:gaussconc}
  If $\mu$ is a probability measure on $\real^d$ such
  that 
  $
    \ent{\mu}{f^2} \le \frac{2}{c}\int \abs{\nabla f}^2 d\mu,
  $
  then $\mu$ has Gaussian concentration.  That is,
  Let $X\in\real^d$ be a random variable with law $\mu$.
  Then, for all $1$-Lipschitz functions 
  $\phi:\real^d\rightarrow\real$ and for all $r>0$, 
  $$
    \prob{\phi(X) \ge \xv\phi(X)+r}\le \ee^{-cr^2/2}.
  $$
\end{thm}

\begin{thm}[\cite{LEDOUX2001}, Corollary 5.7]
  \label{thm:prodmeasure}
  Let $X_1,\ldots,X_n\in\real^d$ be random variables with measures 
  $\mu_1,\ldots,\mu_n$, which are all strongly log-concave
  with coefficients $c_1,\ldots,c_n$.  Let 
  $\nu = \mu_1\otimes\ldots\otimes\mu_n$ be the product measure
  on $\real^{d\times n}$.  Then,
  $$
    \ent{\nu}{f^2} \le \frac{2}{\min_{i}c_i}
    \int \abs{\nabla f}^2 d\nu.
  $$
\end{thm}

Combining Theorems~\ref{thm:prodmeasure}  
and~\ref{thm:gaussconc} immediately gives the
following corollary.
\begin{cor}
  \label{thm:prodconc}
  Let $X_1,\ldots,X_n\in\real^d$ have measures 
  $\mu_1,\ldots,\mu_n$, which are all strongly log-concave
  with coefficients $c_1,\ldots,c_n$.  Let 
  $\nu = \mu_1\otimes\ldots\otimes\mu_n$ be the product measure
  on $\real^{d\times n}$.  Then,
  for any $1$-Lipschitz $\phi:(\real^d)^n\rightarrow\real$
  and for any $r>0$,
  $$ 
    \prob{\phi(X_1,\ldots,X_n)\ge \xv\phi(X_1,\ldots,X_n)+r}
    \le \ee^{-\min_{i}c_ir^2/2}.
  $$
\end{cor}

\subsection{Concentration results for sub-exponential measures}

If the log Sobolev inequality from above is replaced with the weaker 
spectral gap or Poincar{\'e} inequality, then we have the
sub-exponential measures.

\begin{thm}[\cite{LEDOUX2001}, Corollary 5.15]
  \label{thm:spectralGap}
  Let $X$, a random variable on $\real^d$ with measure $\mu$, 
  satisfy the Poincar{\'e} 
  inequality
  $$
    \var{f(X)} \le C\int\abs{\nabla f}^2d\mu
  $$
  for some $C>0$ and for all locally Lipschitz functions $f$.
  Then, for $X_1,\ldots,X_n\in\real^d$ iid copies of $X$ and for
  some Lipschitz function $\phi:\real^{d\times n}\rightarrow\real$,
  $$
    \prob{
      \phi(X_1,\ldots,X_n) \ge \xv\phi(X_1,\ldots,X_n) + r
    } \le
    \exp\left(
      -\frac{1}{K} \min\left\{\frac{r}{b},\frac{r^2}{a^2}\right\}
    \right)
  $$
  where $K>0$ in a constant depending only on $C$ and 
  $$
    a^2 \ge \sum_{i=1}^n \abs{\nabla_i \phi}^2,~~~~~
    b   \ge \max_{i=1,\ldots,n} \abs{\nabla_i \phi}.
  $$
\end{thm}

\subsection{Concentration results for bounded random variables}
\label{app:boundDiff}

The following results can be found in more depth in 
\cite{GINENICKL2015} Section 3.3.4 and specifically 
in Example 3.3.13 (a).  
Theorem~\ref{thm:boundDiff} below is effectively a more 
general version of Hoeffding's Inequality.
To establish it, we begin
with the definition of functions of bounded differences.
\begin{defn}[Functions of Bounded Differences]
  A function $f:\real^{d\times n}\rightarrow \real$ is 
  of bounded differences if
  $$
    \sup_{x_i,x_i',x_j\in\real^d, j\ne i}
    \abs*{
      f(x_1,\ldots,x_n) - f(x_1,\ldots,x_i',\ldots,x_n)
    } \le c_i
  $$
\end{defn}
Then, Gaussian concentration can be established for functions 
of bounded differences by the following theorem.
\begin{thm}
  \label{thm:boundDiff}
  Let $X_1,\ldots,X_n\in\real^d$ and 
  $Z=f(X_1,\ldots,X_n)$ where $f$ has bounded differences
  with $c = \sum_{i=1}^n c_i$.  Then, for all $r>0$,
  $$
    \prob{Z\ge \xv Z + r} \le \ee^{-2r^2/c^2}.
  $$
\end{thm}
     

\end{document}